\documentclass[11pt]{article}

\usepackage[american]{babel}
\usepackage[T1]{fontenc}
\usepackage[utf8]{inputenc}
\usepackage[text={15.0cm,22.0cm}]{geometry}
\usepackage{lmodern}
\usepackage{microtype}

\usepackage{amsmath, amssymb, amsthm, mathtools, thmtools}
\usepackage{wrapfig}
\usepackage{nicefrac}

\usepackage[inline]{enumitem}
\setlist[enumerate]{label=(\roman*),itemsep=-1\parsep,topsep=0.5\parsep}

\usepackage{hyperref}

\usepackage{authblk}

\bibliographystyle{plainurl}

%%%%%%%%%%%%%%%%%%%%%%%%%%%%%%%%%%%%%%%%%%%%%%%%

\declaretheorem{theorem}
\declaretheorem[sibling=theorem]{lemma}

\theoremstyle{definition}
\declaretheorem[sibling=theorem]{definition}

\theoremstyle{remark}
\declaretheorem[sibling=theorem]{remark}

%%%%%%%%%%%%%%%%%%%%%%%%%%%%%%%%%%%%%%%%%%%%%%%%

% Author macros::begin %%%%%%%%%%%%%%%%%%%%%%%%%%%%%%%%%%%%%%%%%%%%%%%%

\newcommand{\Cmax}{C_{\max}}
\newcommand{\RCm}{R||\Cmax}
\newcommand{\PCm}{P||\Cmax}
\newcommand{\RA}{P|\validmachs{j}|\Cmax}
\newcommand{\machs}{\mathcal{M}}
\newcommand{\jobs}{\mathcal{J}}
\newcommand{\validmachs}[1]{M(#1)}
\newcommand{\validjobs}[1]{J(#1)}

\newcommand{\treewidth}{\mathrm{tw}}

\newcommand{\amachs}[1]{M_{#1}}
\newcommand{\iamachs}[1]{\check{M}_{#1}}
\newcommand{\niamachs}[1]{\tilde{M}_{#1}}

\newcommand{\ajobs}[1]{J_{#1}}
\newcommand{\iajobs}[1]{\check{J}_{#1}}
\newcommand{\niajobs}[1]{\tilde{J}_{#1}}

\newcommand{\desc}{\mathrm{desc}}

\newcommand{\Opt}{\mathrm{OPT}}

\newcommand{\numload}{L}

\newcommand{\Oh}{\mathcal{O}}

\newcommand{\twp}{\mathrm{tw}_{\mathrm{p}}}
\newcommand{\twd}{\mathrm{tw}_{\mathrm{d}}}
\newcommand{\twi}{\mathrm{tw}_{\mathrm{i}}}

\newcommand{\er}{r}
\newcommand{\pe}{p}
\newcommand{\el}{\ell}
\newcommand{\eler}{\set{\el,\er}}

\newcommand{\sched}{S}
\newcommand{\niasched}{\tilde{S}}

\newcommand{\cutrank}{\mathrm{cutrk}}
\newcommand{\rankwidth}{\mathrm{rw}}
\newcommand{\cliquewidth}{\mathrm{cw}}

\newcommand{\cutjobs}[1]{J_{#1}}
\newcommand{\cutmachs}[1]{M_{#1}}

\newcommand{\sendjobs}[1]{\vec{J}_{#1}}
\newcommand{\splitjobs}[1]{\vec{J}_{#1}}

\newcommand{\typesizes}[1]{\varphi_{#1}}
\newcommand{\typemachs}[1]{M_{#1}}

\newcommand{\ZZ}{\mathbb{Z}}

\newcommand{\eps}{\varepsilon}

\DeclarePairedDelimiter\parens{(}{)}

\DeclarePairedDelimiter\set{\lbrace}{\rbrace}
\DeclarePairedDelimiterX\sett[2]{\lbrace}{\rbrace}{ #1 \,\delimsize| \,\mathopen{} #2 }

\theoremstyle{plain}

%%%%%%%%%%%%%%%%%%%%%%%%%%%%%%%%%%%%%%%%%%%%%%%%%%%%%%%%%%%%%%%%%%%%%%%

\title{Structural Parameters for Scheduling with Assignment Restrictions\footnote{This work was partially supported by the DAAD (Deutscher Akademischer Austauschdienst) and by the German Research Foundation (DFG) project JA 612/15-1.}}

\author[1]{Klaus Jansen}
\author[1]{Marten Maack}
\author[2]{Roberto Solis-Oba}
\affil[1]{University of Kiel, Kiel, Germany, \texttt{\{kj,mmaa\}@informatik.uni-kiel.de}}
\affil[2]{Western University, London, Canada, \texttt{solis@csd.uwo.ca}}

%\subjclass{F.2.2 Nonnumerical Algorithms and Problems}% mandatory: Please choose ACM 1998 classifications from http://www.acm.org/about/class/ccs98-html . E.g., cite as "F.1.1 Models of Computation". 
%\keywords{Scheduling, Approximation, FPT Algorithms , Treewidth, Rankwidth}% mandatory: Please provide 1-5 keywords
% Author macros::end %%%%%%%%%%%%%%%%%%%%%%%%%%%%%%%%%%%%%%%%%%%%%%%%%

\begin{document}

\maketitle

\begin{abstract}
We consider scheduling on identical and unrelated parallel machines with job assignment restrictions. 
These problems are \textsf{NP}-hard and they do not admit polynomial time approximation algorithms with approximation ratios smaller than $1.5$ unless \textsf{P=NP}. 
However, if we impose limitations on the set of machines that can process a job, the problem sometimes becomes easier in the sense that algorithms with approximation ratios better than $1.5$ exist. 
We introduce three graphs, based on the assignment restrictions and study the computational complexity of the scheduling problem with respect to structural properties of these graphs, in particular their tree- and rankwidth. 
We identify cases that admit polynomial time approximation schemes or FPT algorithms, generalizing and extending previous results in this area.
\end{abstract}

\section{Introduction}\label{sec:intro}

We consider the problem of makespan minimization for scheduling on unrelated parallel machines.
In this problem a set $\jobs$ of $n$ jobs has to be assigned to a set $\machs$ of $m$ machines via a schedule $\sigma:\jobs\rightarrow\machs$.
A job $j$ has a processing time $p_{ij}$ for every machine $i$ and the goal is to minimize the makespan $\Cmax(\sigma)=\max_{i} \sum_{j\in\sigma^{-1}(i)}p_{ij}$.
In the three-field notation this problem is denoted by $\RCm$.
On some machines a job might have a very high, or even infinite processing time, so it should never be processed on these machines.
This amounts to assignment restrictions in which for every job $j$ there is a subset $\validmachs{j}$ of machines on which it may be processed.
An important special case of $\RCm$ is given if the machines are identical in the sense that each job $j$ has the same processing time $p_j$ on all the machines on which it may be processed, i.e., $p_{ij}\in\set{p_j,\infty}$.
This problem is sometimes called restricted assignment and is denoted as $\RA$ in the three-field notation.

We study versions of $\RCm$ and $\RA$ where the restrictions are in some sense well structured.
In particular we consider three different graphs that are defined based on the job assignment restrictions and study how structural properties of these graphs affect the computational complexity of the corresponding scheduling problems.
We briefly describe the graphs.
In the \emph{primal graph} the vertices are the jobs and two vertices are connected by an edge, iff there is a machine on which both of the jobs can be processed.
In the \emph{dual graph}, on the other hand, the machines are vertices and two of them are adjacent, iff there is a job that can be processed by both machines.
Lastly we consider the \emph{incidence graph}.
This is a bipartite graph and both the jobs and machines are vertices.
A job $j$ is adjacent to a machine $i$, if $i\in\validmachs{j}$. 
In Figure \ref{fig:graphs} an example of each graph is given.
These graphs have also been studied in the context of constraint satisfaction (see e.g. \cite{Sze03} or \cite{SS10}) and we adapted them for machine scheduling.

We consider the above scheduling problems in the contexts of parameterized and approximation algorithms.
For $\alpha>1$ an \emph{$\alpha$-approximation} for a minimization problem computes a solution of value $A(I)\leq\alpha\Opt(I)$, where $\Opt(I)$ is the optimal value for a given instance $I$.
A family of algorithms consisting of $(1+\eps)$-approximations for each $\eps>0$ with running times polynomial in the input length (and $1/\eps$) is called a \emph{(fully) polynomial time approximation scheme} (F)PTAS.
Let $\pi$ be some parameter defined for a given problem, and let $\pi(I)$ be its value for instance $I$.
The problem is said to be \emph{fixed-parameter tractable} (FPT) for $\pi$, if there is an algorithm that given $I$ and $\pi(I)=k$ solves $I$ in time $\Oh(f(k)|I|^c)$, where $c$ is a constant, $f$ any computable function and $|I|$ the input length.
This definition can easily be extended to multiple parameters.

\paragraph{Related work.}

\begin{wrapfigure}{r}{3.0cm} 
\centering
\includegraphics[scale=0.8]{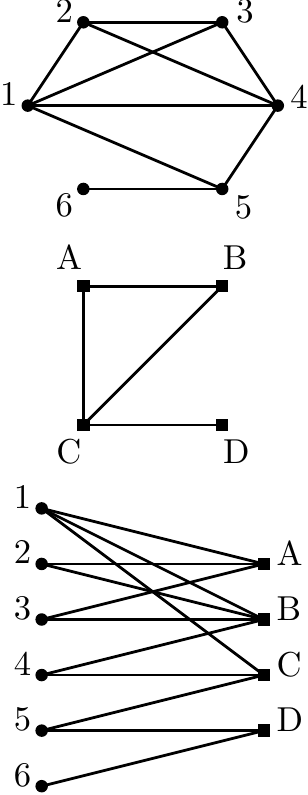}
\caption{Primal, dual and incidence graph for an instance with $6$ jobs and $4$ machines.}
\label{fig:graphs}
\end{wrapfigure}
In 1990 Lenstra, Shmoys and Tardos \cite{LST90} showed, in a seminal work, that there is a $2$-approximation for $\RCm$ and that the problem cannot be approximated with a ratio better than $1.5$ unless \textsf{P}$=$\textsf{NP}.
Both bounds also hold for $\RA$ and have not been substantially improved since that time.
The case where the number of machines is constant is weakly \textsf{NP}-hard and there is an FPTAS for this case \cite{HS76}.
In 2012 Svensson \cite{Sve12} presented an interesting result for $\RA$:
%There is an algorithm that estimates the makespan of an optimal schedule with a ratio of $33/17 + \eps$, but without computing a corresponding schedule.
%There has been a series of further recents results for this problem (e.g. \cite{CKL15,JLM16,HO16}).
A special case of the restricted assignment problem called graph balancing was studied by Ebenlendr et al. \cite{EKS14}.
In this variant each job can be processed by at most $2$ machines and therefore an instance can be seen as a (multi-)graph where the machines are vertices and the jobs edges.
They presented a $1.75$ approximation for this problem and also showed that the $1.5$ inapproximability result remains true.
Lee et al. \cite{LLP09} studied the version of graph balancing where (in our notation) the dual graph is a tree and showed that there is an FPTAS for it.
Moreover, the special case of graph balancing where the graph is simple has been considered.
For this problem Asahiro et al. \cite{AMO11} presented among other things a pseudo-polynomial time algorithm for the case of graphs with bounded treewidth. 
For certain cases of $\RA$ with job assignment restrictions that are in some sense well-structured PTAS results are known.
In particular for the \emph{path- and tree-hierarchical} cases (\cite{OLL08} and \cite{EL11}) in which the machines can be arranged in a path or tree and the jobs can only be processed on subpaths starting at the leftmost machine or at the root machine respectively, and the \emph{nested} case (\cite{MSW10}), where $\validmachs{j}\subseteq\validmachs{j'}$, $\validmachs{j'}\subseteq\validmachs{j}$ or $\validmachs{j}\cap\validmachs{j'}=\emptyset$ holds for each pair of jobs $(j,j')$.

The study of $\RCm$ from the FPT perspective has started only recently.
Mnich and Wiese \cite{MW15} showed that $\RCm$ is FPT for the pair of parameters $m$ and the number of distinct processing times.
The problem is also FPT for the parameter pair $\max p_{ij}$ and the number of machine types \cite{KK16}.
Two machines have the same type, if each job has the same processing time on them. 
Furthermore Szeider \cite{Sze08} showed that graph balancing on simple graphs with unary encoding of the processing times is not FPT for the parameter treewidth under usual complexity assumptions.

\paragraph{Results.}

In this paper we present a graph theoretical viewpoint for the study of scheduling problems with job assignment restrictions that we believe to be of independent interest.
Using this approach we identify structural properties for which the problems admit approximation schemes or FPT algorithms, generalizing and extending previous results in this area. 
The results are based on dynamic programming utilizing tree and branch decompositions of the respective graphs.
For the approximation schemes the dynamic programs are combined with suitable rounding approaches.

Tree and branch decompositions are associated with certain structural \emph{width parameters}.
We consider two of them: treewidth and rankwidth.
In the following we denote the treewidth of the primal, dual and incidence graph with $\twp$, $\twd$ and $\twi$, respectively.
For the definitions of these concepts we refer to Section \ref{sec:prelim}.

We now describe our results in more detail.
Let $\validjobs{i}$ be the set of jobs the machine $i$ can process.
In the context of parameterized algorithms we show the following.
\begin{theorem}\label{thm:tw_primal_result_FPT}
$\RCm$ is FPT for the parameter $\twp$.
\end{theorem}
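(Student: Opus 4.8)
The plan is to reduce the problem to a dynamic program over a tree decomposition of the primal graph $G_{\mathrm p}$, using one simple but decisive structural fact: for every machine $i$ the job set $\validjobs{i}$ is a clique of $G_{\mathrm p}$ (any two jobs that can both be run on $i$ are, by definition, adjacent), so $|\validjobs{i}|\le \twp+1$, and moreover every clique of a graph lies inside a single bag of every tree decomposition, so $\validjobs{i}$ is contained in at least one bag. Concretely, I would first build $G_{\mathrm p}$ (adding, for each machine, a clique on $\validjobs{i}$), compute a nice tree decomposition $(T,\{X_t\}_{t\in T})$ of width $\twp$ in FPT time, and for each machine $i$ fix a node $\nu(i)\in T$ with $\validjobs{i}\subseteq X_{\nu(i)}$; the nodes eligible to be $\nu(i)$ form a connected subtree, so this is well defined (take, say, the one closest to the root). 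If some job $j$ has $\validmachs{j}=\emptyset$ we report infeasibility and stop.

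The dynamic program runs bottom-up on $T$. The key design choice is that a state at a node $t$ is merely a subset $A\subseteq X_t$ of the bag jobs that have ``already been assigned'' by machines whose $\nu$-node lies strictly inside the subtree $T_t$, so there are at most $2^{\twp+1}$ states; for each state we store the minimum, over all partial assignments consistent with it, of the largest machine load incurred so far. At a leaf the bag is empty and the value is $0$. At an introduce node adding $j$, no machine eligible for $j$ has been handled yet (its $\nu$-node cannot lie strictly below), so we extend the state by $j\notin A$. At a forget node removing $j$ we require $j\in A$, since every machine $i$ with $j\in\validjobs{i}$ has $\nu(i)$ below this node and has therefore been processed; then drop $j$. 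At a join node with children $t_1,t_2$ (equal bags), the subtrees $T_{t_1}$ and $T_{t_2}$ are disjoint, so we combine a left state $(A_1,v_1)$ and a right state $(A_2,v_2)$ only when $A_1\cap A_2=\emptyset$, producing $(A_1\cup A_2,\max\{v_1,v_2\})$ and keeping the minimum value over collisions. Finally, at \emph{every} node $t$, after performing the above we process the machines $i$ with $\nu(i)=t$ one at a time: since $\validjobs{i}\subseteq X_t$, we branch over all subsets $S_i\subseteq\validjobs{i}\setminus A$ (at most $2^{\twp+1}$ of them), updating the state to $A\cup S_i$ and the value to $\max\{v,\sum_{j\in S_i}p_{ij}\}$. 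The optimal makespan is the value stored at the root for the empty state, and a standard traceback recovers the schedule.

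For correctness one proves by induction on $T$ that the reachable states at $t$ are exactly the projections of valid partial assignments of the jobs occurring in $T_t$ to the machines with $\nu$-node in $T_t$: each machine is handled exactly once, at $\nu(i)$, where all of $\validjobs{i}$ is simultaneously present, so its final load is counted correctly; the forget rule forces every job to be assigned eventually; and the ``unassigned-only'' restriction in the machine step together with the disjointness test at joins rules out double assignments. For the running time, the nice tree decomposition has $\Oh(\twp\, n)$ nodes, the introduce/forget/join updates cost $2^{\Oh(\twp)}$ per node, and the machine steps cost $2^{\Oh(\twp)}$ per machine, hence $m\cdot 2^{\Oh(\twp)}$ in total; adding the FPT cost of computing the decomposition yields running time $2^{\Oh(\twp)}\poly(n,m)\le f(\twp)\,|I|^{\Oh(1)}$, so $\RCm$ is FPT for $\twp$.

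The main obstacle is precisely the choice of state. The naive encoding ``which machine does each bag job go to'' is hopeless, because $\twp$ bounds the number of jobs \emph{per machine} but says nothing about the number of eligible machines \emph{per job}, so that state space can be unbounded. The resolution is to keep only the single bit ``assigned / not yet assigned'' per bag job and to charge each machine's entire load at the unique node $\nu(i)$ where its whole job set sits in one bag --- a node that exists exactly because $\validjobs{i}$ is a clique. After that, the remaining work is the routine (if slightly fiddly) verification that the four node types maintain the invariant, in particular that the join disjointness test and the unassigned-only machine step together prevent a job from being assigned twice while the forget rule prevents any job from being left unassigned.
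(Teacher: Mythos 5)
Your proof is correct and follows essentially the same route as the paper: a bottom-up dynamic program over a tree decomposition of the primal graph whose state is a subset of the current bag, exploiting that each $\validjobs{i}$ is a clique and hence lies in a single bag, with the sequential machine-enumeration DP used to assign bag jobs to the machines anchored at a node. The only difference is presentational --- you anchor each machine explicitly at a chosen node $\nu(i)$, whereas the paper handles a machine implicitly at the node where its jobs become ``nearly inactive'' --- but the state space, the key structural fact, and the running time $2^{\Oh(\twp)}\poly(n,m)$ coincide.
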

\begin{theorem}\label{thm:tw_dual_inst_result_FPT}
$\RCm$ is FPT for the pair of parameters $k_1,k_2$ with $k_1\in\set{\twd,\twi}$ and $k_2\in\set{\Opt,\max_i|\validjobs{i}|}$.
\end{theorem}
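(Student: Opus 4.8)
The plan is a dynamic program over a tree decomposition that treats both choices of $k_2$ by the same mechanism. First I would argue that it suffices to prove the statement for $k_1=\twi$: since $\validmachs{j}$ is a clique in the dual graph, a tree decomposition of the dual graph of width $\twd$ contains, for each job $j$, a bag with $\validmachs{j}$; attaching to such a bag a new leaf with bag $\validmachs{j}\cup\set{j}$ turns it into a tree decomposition of the incidence graph of width at most $\twd+1$, so $\twi\le\twd+1$ and any bound of the form $f(\twi,k_2)\cdot\poly(|I|)$ is also a bound in terms of $\twd$. We may assume all $p_{ij}$ are positive integers. If $k_2=\Opt$, then in any schedule of makespan $\Opt$ every machine has load, hence number of jobs, at most $\Opt$; if $k_2=\max_i|\validjobs{i}|$, every machine gets at most $k_2$ jobs in any schedule. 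So in both cases it suffices to search over schedules in which each machine $i$ processes a subset of $\validjobs{i}$ of size at most $\lambda$, where $\lambda:=\Opt$ in the first and $\lambda:=\max_i|\validjobs{i}|$ in the second case.

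Next I would fix a nice tree decomposition $(\mathbb{T},(B_t)_t)$ of the incidence graph of width $\twi$ (computable in FPT time) with the usual leaf, introduce-vertex, forget-vertex and join nodes and empty root bag, and use two standard facts: the nodes whose bag contains a fixed vertex $v$ form a connected subtree $\mathbb{T}_v$; and for every job $j$ and every $i\in\validmachs{j}$ some bag contains both $i$ and $j$ (since $\set{i,j}$ is an edge), and that bag lies in $\mathbb{T}_j$. The algorithm builds a schedule while traversing $\mathbb{T}$ bottom-up, committing the assignment of a job $j$ to a machine $i$ only at a node whose bag contains both of them; the second fact guarantees that within the "window" $\mathbb{T}_j$ such a node is always available.

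The DP state at a node $t$ records: for each job $j\in B_t$ a flag in $\set{\text{pending},\text{committed}}$; and for each machine $i\in B_t$ either its current load, a value in $\set{0,\dots,\Opt}$, if $k_2=\Opt$, or the set of already-committed jobs currently assigned to it, a subset of $\validjobs{i}$ of size at most $\lambda$, if $k_2=\max_i|\validjobs{i}|$. Thus the number of states per node is at most $(2\Opt+2)^{\twi+1}$, respectively $2^{(\lambda+1)(\twi+1)}$, i.e.\ bounded by a function of $k_1$ and $k_2$. With each reachable state we store the smallest makespan attainable on the machines already forgotten strictly below $t$ (in the decision variant for a target $T$ we store a bit instead and additionally require every forgotten machine to have load at most $T$). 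The transitions are routine: introducing a machine sets its load/set to empty; introducing a job sets its flag to pending; at any node whose bag contains a pending job $j$ and a machine $i\in\validmachs{j}$ one may commit $j$ to $i$, flipping the flag and updating $i$'s load/set, discarding the branch if the load would exceed $\Opt$ or the set exceed size $\lambda$; forgetting a machine finalizes its load and updates the stored makespan by a maximum (discarding if it exceeds $T$ in the decision variant); forgetting a job is allowed only if its flag is committed; and a join of two children with the same bag adds the loads (resp.\ takes disjoint unions of the job sets), forbids any job to be committed in both children, and takes the maximum of the two stored makespans. At the empty root one reads off the minimum makespan over all reachable states; in the case $k_2=\Opt$ this is run for $T=1,2,\dots$ and the first feasible $T$ is returned, which equals $\Opt$. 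The running time is (number of states)$^2\cdot\poly(|I|)$ per node, times $\poly(|I|)$ nodes, times at most $\Opt$ outer iterations, hence FPT for $(k_1,k_2)$.

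I expect the main obstacle to be the bookkeeping ensuring that each job is assigned exactly once and to a feasible machine. A job vertex may be introduced at several leaf-ward ends of $\mathbb{T}_j$, so — as for connectivity-type problems on tree decompositions — one must verify that the pending/committed flag together with the join rule forbidding double commitment really enforces a single commitment per job, and that such a commitment can always be placed (which is exactly the fact that $j$ shares a bag with each $i\in\validmachs{j}$ somewhere in $\mathbb{T}_j$); the rest of correctness is the standard induction over $\mathbb{T}$. A minor point is that the bound ``at most $\Opt$ jobs per machine'' uses positivity of the processing times; jobs admitting a machine with processing time $0$ can be removed beforehand by a trivial preprocessing step without changing $\Opt$.
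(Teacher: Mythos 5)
Your proposal is correct and follows essentially the same route as the paper: a bottom-up dynamic program over a tree decomposition of the incidence graph whose state records, for each bag machine, either its current load (at most $\Opt+1$ values) or the subset of $\validjobs{i}$ already assigned to it (at most $2^{\max_i|\validjobs{i}|}$ values), together with which bag jobs have already been assigned below -- this is exactly the paper's pair $(J,\lambda)$ of handled active jobs and load vector on active machines, with the same two bounds on the number of distinct loads. The only organizational difference is that the paper also gives a separate, direct dynamic program for the dual graph, whereas you fold that case into the incidence case via $\twi\le\twd+1$, which is precisely the paper's own Remark \ref{rem:twpdi}, so nothing is lost.
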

Note that $\RCm$ with constant $k_2$ remains NP-hard \cite{EKS14}.
In the context of approximation we get:
\begin{theorem}\label{thm:tw_dual_inst_result_APPROX}
$\RCm$ is weakly NP-hard, if $\twd$ or $\twi$ is constant and there is an FPTAS for both of these cases.
\end{theorem}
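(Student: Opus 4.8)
\emph{Hardness.} For the weak \textsf{NP}-hardness I would simply restrict $\RCm$ to instances on two unrestricted machines: this already contains $P2||\Cmax$, hence PARTITION, so it is \textsf{NP}-hard, while its dual graph is a single edge ($\twd=1$) and its incidence graph is $K_{2,n}$ ($\twi=2$); thus one and the same family witnesses hardness for both bounded classes. The FPTAS below rules out strong \textsf{NP}-hardness unless \textsf{P}$=$\textsf{NP}, which is why ``weakly'' is the correct qualifier.

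\emph{FPTAS.} The first observation is that a width-$w$ tree decomposition of the dual graph yields a width-$(w{+}1)$ one of the incidence graph -- add each job $j$ to a bag containing the clique $\validmachs{j}$ -- so $\twi\le\twd+1$ and it suffices to treat the case $\twi\le w$ for a constant $w$. Fix $\eps>0$. The plan is the standard ``estimate, round, dynamic program'' recipe: first compute $T_0$ with $T_0\le\Opt\le 2T_0$ via the $2$-approximation of \cite{LST90}; then set $\delta:=\eps T_0/n$ and replace each finite $p_{ij}$ by $\bar p_{ij}:=\delta\lfloor p_{ij}/\delta\rfloor$, so that loads change by at most $n\delta=\eps T_0\le\eps\Opt$ and the rounded optimum is $\le\Opt$. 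It then remains to decide, for each target $T$ that is a multiple of $\delta$ in $[T_0,2T_0{+}\delta]$ (there are $\Oh(n/\eps)$ of them), whether a schedule with every machine's $\bar p$-load at most $T$ exists, and to produce one; outputting the schedule for the smallest feasible $T$ yields real makespan $\le T+\eps\Opt\le(1+2\eps)\Opt$, and we rescale $\eps$.

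\emph{The dynamic program.} For a fixed $T$ I would compute (Bodlaender's algorithm, linear time for constant treewidth) a nice tree decomposition of the incidence graph of width $\le w$ and run a bottom-up DP. At a node $t$ with bag $X_t$ and subtree-union $V_t$, a state records a rounded load $\ell_i\in\{0,\delta,\dots,T\}$ for every machine $i\in X_t$ and one bit per job $j\in X_t$ saying whether $j$ has already been assigned; the table holds all states realizable by a partial schedule that assigns every job $j$ with $\validmachs{j}\subseteq V_t$ to a machine of $\validmachs{j}$, subject to the invariants that every already-forgotten machine has load $\le T$ and every already-forgotten job carries the bit ``assigned''. The transitions are routine: introduce sets load $0$ / bit $0$; forget-machine keeps $\ell_i\le T$ and drops $i$; forget-job keeps bit-set states and drops $j$; a join adds the two loads of each common machine (each job is assigned in exactly one subtree, so nothing is double-counted) and \textsf{OR}s the bits; and at any node one may ``assign'' a bit-$0$ job $j\in X_t$ to a machine $i\in X_t\cap\validmachs{j}$, increasing $\ell_i$ by $\bar p_{ij}$. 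Since every incidence edge $\{j,i\}$ puts $j$ and $i$ in a common bag, each job has at least one assignment opportunity before being forgotten, so feasibility for target $T$ is exactly non-emptiness of the (empty-bag) root table, and back-pointers recover a schedule. With $|X_t|\le w{+}1$ each node carries $\Oh((n/\eps)^{w+1}2^{w+1})$ states, there are $\Oh(w(n{+}m))$ nodes, and we repeat over $\Oh(n/\eps)$ targets, giving total time $\poly(n,m,1/\eps)$ for constant $w$.

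\emph{Main obstacle.} The delicate point is the correctness of the assigned-bit bookkeeping: one must verify that a partial schedule on $V_t$ is faithfully represented regardless of whether a job or its eventual machine is the first to leave the current bag, which is precisely what the ``assign'' move together with the two forget-invariants ensures; the usual induction over the decomposition then proves that the root table is non-empty iff a global schedule of $\bar p$-makespan $\le T$ exists. (For the $\twd$ case one can instead argue directly, without passing to the incidence graph: $\validmachs{j}$ is a clique in the dual graph, hence of size $\le\twd+1$ and contained in some bag, so each job is handled once, at the topmost bag containing $\validmachs{j}$, and the DP state need only track the rounded machine loads -- no job bits at all.)
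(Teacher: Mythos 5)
Your proposal is correct and follows essentially the same route as the paper: weak hardness via $P2||\Cmax$ (a single dual edge, incidence graph $K_{2,n}$), and an FPTAS obtained by rounding processing times against a $2$-approximate bound so that only $\Oh(n/\eps)$ load values per machine survive, then running a polynomial-size dynamic program over a tree decomposition of the incidence (resp.\ dual) graph whose states are load vectors on the active machines together with assignment information for the active jobs. The only differences are cosmetic --- the paper computes optimal values per state rather than deciding feasibility for each target $T$, and it commits each job at the moment it is forgotten instead of via explicit assign-moves with per-job bits --- so no further comparison is needed.
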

The hardness is due to the hardness of scheduling on two identical parallel machines $P2||\Cmax$.
The result for the dual graph is a generalization of the result in \cite{LLP09} and resolves cases that were marked as open in that paper.
All results mentioned so far are discussed in Section \ref{sec:tw}.
In the following section we consider the rankwidth:
\begin{theorem}\label{thm:rw_result}
There is a PTAS for instances of $\RA$ where the rankwidth of the incidence graph is bounded by a constant.
\end{theorem}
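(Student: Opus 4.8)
The plan is to combine the usual $(1+\eps)$-rounding of a makespan instance with a bottom-up dynamic program over a branch (rank) decomposition of the incidence graph; the point that makes the rankwidth bound usable — in place of the small separators behind Theorem~\ref{thm:tw_dual_inst_result_APPROX} — is that a cut of small cut-rank splits the graph into two sides each of which decomposes into $\Oh_{\eps,k}(1)$ \emph{interchangeable} job- and machine-classes, so that partial schedules can be encoded by polynomially many aggregate profiles. \emph{Preprocessing:} by binary search it suffices, for a fixed guess $T$, to either output a schedule of makespan at most $(1+\Oh(\eps))T$ or to certify $\Opt>T$; rescale so $T=1$ and discard every job with $p_j>1$. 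Call a job \emph{big} if $p_j>\eps$ and \emph{small} otherwise, and round each big $p_j$ up to a multiple of $\eps^2$, losing a factor $1+\eps$ and leaving $\Oh(\eps^{-2})$ distinct big sizes. We henceforth track each machine's load only as a multiple of $\eps^2$ in $[0,1+\eps]$ — that is, $\Oh(\eps^{-2})$ \emph{load levels} — and account for the small jobs on a machine only through their total volume, rounded to a multiple of $\eps^2$; since a machine carries big-job load in at most $\eps^{-1}$ pieces plus one rounded small volume, the per-machine error stays $\Oh(\eps)$. Being in the restricted-assignment setting $\RA$ is essential here: each job has a single size $p_j$, so ``size class'' is machine-independent, whereas no comparable rounding exists for $\RCm$.

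\emph{Decomposition and structure.} Compute a branch decomposition of the incidence graph $G$ of width $\Oh(k)$, where $k$ is its rankwidth (computable in polynomial time for constant $k$), and root it so that each node $v$ corresponds to the leaf set $X_v\subseteq V(G)$ below it, with $\cutrank(X_v)=\Oh(k)$. I would establish: writing $r=\cutrank(X_v)$, the rows of the bipartite adjacency matrix between $X_v$ and $V(G)\setminus X_v$ span an $\mathbb{F}_2$-space of dimension at most $r$, hence take at most $2^r$ values, so the vertices of $X_v$ fall into at most $2^r$ \emph{types} according to their neighbourhood in $V(G)\setminus X_v$ (and symmetrically on the other side), and adjacency across the cut depends only on the pair of types. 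These types are moreover consistent along the tree: since $X_{v_1}\subseteq X_v$ gives $V(G)\setminus X_v\subseteq V(G)\setminus X_{v_1}$, a vertex's type at a child is a restriction of — hence determines — its type at the parent; and for siblings $v_1,v_2$ with parent $v$, adjacency between $X_{v_1}$ and $X_{v_2}$ is determined by the corresponding $v_1$- and $v_2$-types. This is the rankwidth analogue of ``bounded separators''.

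\emph{The dynamic program.} A \emph{profile} at node $v$ records, for each job-type at $v$ and each big-size class, the number of still-unassigned jobs of that class in $X_v$ (to be matched to a machine in $V(G)\setminus X_v$); for each job-type, the rounded total volume of still-unassigned small jobs of that type in $X_v$; and for each machine-type at $v$ and each load level, the number of machines of that type in $X_v$ at that level. By the structure above this is a vector of $\Oh_{\eps,k}(1)$ coordinates, each bounded by a polynomial in $n+m$, so there are $(n+m)^{\Oh_{\eps,k}(1)}$ profiles — polynomial for fixed $\eps,k$, which is exactly what yields a PTAS (the $\eps$ in the exponent rules out an FPTAS). We compute bottom-up the set of reachable profiles: a job leaf contributes one unassigned job of the right class, a machine leaf one machine at load $0$; at an internal node $v$ with children $v_1,v_2$ and reachable child profiles we enumerate the routing across the $v_1$--$v_2$ cut (how many jobs of each class on one side are assigned to machines of each adjacent class on the other, raising load levels but never past $1+\eps$, and likewise for small-job volume) — a transportation problem in $\Oh_{\eps,k}(1)$ variables each bounded by $n$, whose feasible outcomes are enumerable in polynomial time — and then relabel the leftover jobs and machines by their $v$-types via the child-to-parent restriction map. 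At the root $X_v=V(G)$ there is no cut, and we accept iff some reachable root profile has no unassigned job and all machines at levels $\le 1+\eps$; the schedule is read off by tracing the DP.

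Correctness holds because the rounding steps cost only a factor $1+\Oh(\eps)$ and, interchangeability within each (type, size) class being guaranteed by the structural claim, the profiles retain all information relevant to completing a schedule across any cut; hence the DP is exact on the rounded instance. I expect the crux to be this structural/bookkeeping core: proving the bounded-type property together with its consistency up the decomposition, and then designing the profile so that it is simultaneously polynomially bounded, sufficient to certify feasibility across every cut, and updatable in polynomial time at the join nodes (the transportation subproblem). A secondary but genuine point is the handling of small jobs, which must be carried only as rounded aggregate volumes per type to keep the state polynomial, together with the observation — needed to legitimise the machine-independent size classes — that we are confined to $\RA$ rather than $\RCm$.
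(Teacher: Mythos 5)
Your proposal is correct and takes essentially the same route as the paper: reduce to $\Oh_\eps(1)$ size classes by rounding, take a branch decomposition of the incidence graph, use the fact that a cut of cut-rank $r$ yields at most $2^r$ connection types on each side (consistent under the child-to-parent coarsening), and run a DP whose states count jobs per (type, size) class across each cut, giving $(n+m)^{\Oh_{\eps,k}(1)}$ states. The one step you should not wave at is the small jobs: realizing your per-join aggregate volume increments by atomic small jobs chunk-by-chunk can accumulate error beyond $\Oh(\eps)$ per machine (a machine meets small jobs at up to $\Theta(n+m)$ ancestors), so you need either one final global fractional-to-integral rounding of the small-job assignment in the style of Lenstra--Shmoys--Tardos, or the paper's cleaner device (its Rounding Lemma) of replacing each small job by uniformly sized tiny jobs of length $\delta B/n$ before the DP, so that all counts in the DP are exact and the fractional rounding is done once at the end.
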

It can be shown that instances of $\RA$ with path- or tree-hierarchical or nested restrictions are special cases of the case when the incidence graph is a bicograph.
Bicographs are known to have a rankwidth of at most $4$ (see \cite{GV00}) and a suitable branch decomposition can be found very easily.
Therefore we generalize and unify the known PTAS results for $\RA$ with structured job assignment restrictions.

%Due to space limitations we will leave out a lot of the details.
%In particular many proofs are excluded as well as the detailed descriptions of the dynamic programs.
%Instead we focus on explaining the needed concepts and the recurrence relations, and present some intuition.
%The excluded proofs can be found in the appendix.

\section{Preliminaries}\label{sec:prelim}

In the following $I$ will always denote an instance of $\RCm$ or $\RA$ and most of the time we will assume that it is feasible.
We call an instance feasible if $\validmachs{j}\neq\emptyset$ for every job $j\in\jobs$.
A schedule is feasible if $\sigma(j)\in\validmachs{j}$.
For a subset $J\subseteq\jobs$ of jobs and a subset $M\subseteq\machs$ of machines we denote the subinstance of $I$ induced by $J$ and $M$ with $I[J,M]$.
Furthermore, for a set $S$ of schedules for $I$ we let $\Opt(S)=\min_{\sigma\in S}\Cmax(\sigma)$, and $\Opt(I)=\Opt(S)$ if $S$ is the set of all schedules for $I$.
We will sometimes use $\Opt(\emptyset)=\infty$.
Note that there are no schedules for instances without machines.
On the other hand, if $I$ is an instance without jobs, we consider the empty function a feasible schedule (with makespan $0$), and have therefore $\Opt(I)=0$ in that case.

\paragraph{Dynamic programs for $\RCm$.}

We sketch two basic dynamic programs that will be needed as subprocedures in the following.
The first one is based on iterating through the machines.
Let $\Opt(i,J)=\Opt(I[\jobs\setminus J,[i]])$ for $J\subseteq\jobs$ and $i\in [m]:=\set{1,\dots,m}$, assuming $\machs=[m]$.
Then it is easy to see that $
\Opt(i,J) = \min_{J \subseteq J' \subseteq \jobs}\max\set{\Opt(i-1, J'),\sum_{j\in J'\setminus J}p_{ij}}$.
Using this recurrence relation a simple dynamic program can be formulated that computes the values $\Opt(i,J)$.
It holds that $\Opt(I)=\Opt(m,\emptyset)$ and as usual for dynamic programs an optimal schedule can be recovered via backtracking.
The running time of such a program can be bounded by $2^{\Oh(n)}\times\Oh(m)$, yielding the following trivial result:
\begin{remark}
$\RCm$ is FPT for the parameter $n$.
\end{remark} 

The second dynamic program is based on iterating through the jobs.
Let $\lambda\in\ZZ_{\geq 0}^{\machs}$.
We call $\lambda$ a \emph{load vector} and say that a schedule $\sigma$ fulfils $\lambda$, if $\lambda_i=\sum_{j\in\sigma^{-1}(i)}p_{ij}$.
For $j\in [n]$ let $\Lambda(j)$ be the set of load vectors that are fulfilled by some schedule for the subinstance $I[[j],\machs]$, assuming $\jobs=[n]$.
Then $\Lambda(j)$ can also be defined recursively as the set of vectors $\lambda$ with $\lambda_{i^*}=\lambda'_{i^*}+p_{i^*j}$ and $\lambda_{i}=\lambda'_{i}$ for $i\neq i^*$, where $i^*\in\validmachs{j}$ and $\lambda'\in\Lambda(j-1)$.
Using this, a simple dynamic program can be formulated that computes $\Lambda(j)$ for all $j\in[n]$.
$\Opt(I)$ can be recovered from $\Lambda(n)$ and a corresponding schedule can be found via backtracking.
Let there be a bound $\numload$ for the number of distinct loads that can occur on each machine, i.e., $|\sett{\sum_{j\in\sigma^{-1}(i)}p_{ij}}{\sigma \text{ schedule for }I}|\leq\numload$ for each $i\in\machs$.
Then the running time can be bounded by $\numload^{\Oh(m)}\times\Oh(n)$, yielding:
\begin{remark}
$\RCm$ is FPT for the pair of parameters $m$ and $k$ with $k\in\set{\Opt,\max_i|\validjobs{i}|}$.
\end{remark}
For this note that both $\Opt$ and $2^{\max_i|\validjobs{i}|}$ are bounds for the number of distinct loads that can occur on any machine.
This dynamic program can also be used to get a simple FPTAS for $\RCm$ for the case when the number of machines $m$ is constant.
For this let $B$ be an upper bound of $\Opt(I)$ with $B\leq 2\Opt$.
Such a bound can be found with the $2$-approximation by Lenstra et al. \cite{LST90}.
Moreover let $\eps>0$ and $\delta=\eps/2$.
By rounding the processing time of every job up to the next integer multiple of $\delta B/n$ we get an instance $I'$ whose optimum makespan is at most $\eps \Opt(I)$ bigger than $\Opt(I)$.
The dynamic program can easily be modified to only consider load vectors for $I'$, where all loads are bounded by $(1+\delta/n)B$.
Therefore there can be at most $n/\delta+2$ distinct load values for any machine and an optimal schedule for $I'$ can be found in time $(n/\eps)^{\Oh(m)}\times\Oh(n)$.
The schedule can trivially be transformed into a schedule for the original instance without an increase in the makespan.

\paragraph{Tree decompostion and treewidth.}

A \emph{tree decomposition} of a graph $G$ is a pair $(T, \sett{X_t}{t\in V(T)} )$, where $T$ is a tree, $X_t\subseteq V(G)$ for each $t\in V(t)$ is a set of vertices of $G$, called a bag, and the following three conditions hold:
\begin{enumerate}
\item \label{enum:tree_decomp_union}$\bigcup_{t\in V(T)}X_t=V(G)$
\item \label{enum:tree_decomp_charedbag}$\forall \set{u,v}\in E(G)\exists t\in V(T): u,v\in X_t$
\item \label{enum:tree_decomp_connected}For every $u\in V(G)$ the set $T_u:=\sett{t\in V(T)}{u\in X_t}$ induces a connected subtree of $T$.
\end{enumerate}
The \emph{width} of the decomposition is $\max_{t\in V(T)}(|X_t|-1)$, and the \emph{treewidth} $\treewidth(G)$ of $G$ is the minimum width of all tree decompositions of $G$.
It is well known that forests are exactly the graphs with treewidth one, and that the treewidth of $G$ is at least as big as the biggest clique in $G$ minus $1$.
More precisely, for each set of vertices $V'\subseteq V(G)$ inducing a clique in $G$, there is a node $t\in V(T)$ with $V'\subseteq X_t$ (see e.g. \cite{Bod98}).
For a given graph and a value $k$ it can be decided in FPT time (and linear in $|V(G)|$) whether the treewidth of $G$ is at most $k$ and in the affirmative case a corresponding tree decomposition with $\Oh(k|V(G)|)$ nodes can be computed \cite{Bod96}.
However, deciding whether a graph has a treewidth of at most $k$, is \textsf{NP}-hard \cite{ACP87}.

\paragraph{Branch decomposition and rankwidth.}

It is easy to see that graphs with a small treewidth are sparse.
Probably the most studied parameter for \emph{dense} graphs is the cliquewidth $\cliquewidth(G)$.
In this paper however we are going to consider a related parameter called the rankwidth $\rankwidth(G)$.
These two parameters are equivalent in the sense that $\rankwidth(G)\leq\cliquewidth(G)\leq 2^{\rankwidth(G)+ 1}-1 $ \cite{OS06}.
Furthermore it is known that $\cliquewidth(G)\leq 3\times 2^{\treewidth(G)-1}$ \cite{CR05}.
On the other hand $\treewidth(G)$ cannot be bounded by any function in $\cliquewidth(G)$ or $\rankwidth(G)$, which can easily be seen by considering complete graphs.

A \emph{cut} of $G$ is a partition of $V(G)$ into two subsets.
For $X,Y\subseteq V(G)$ let $A_G[X,Y]=(a_{xy})$ be the $|X|\times |Y|$ adjacency submatrix induced by $X$ and $Y$, i.e., $a_{xy}=1$ if $\set{x,y}\in E(G)$ and $a_{xy}=0$ otherwise for $x\in X$ and $y\in Y$.
The \emph{cut rank} of $(X,Y)$ is the rank of $A_G[X,Y]$ over the field with two elements GF($2$) and denoted by $\cutrank_G(X,Y)$.
A \emph{branch decomposition} of $V(G)$ is a pair $(T,\eta)$, where $T$ is a tree with $|V(G)|$ leaves whose internal nodes have all degree $3$, and $\eta$ is a bijection from  $V(G)$ to the leafs of $T$.
For each $e=\set{s,t}\in E(T)$ there is an induced cut $\set{X_s,X_t}$ of $G$:
For $x\in\set{s,t}$ the set $X_x$ contains exactly the nodes $\eta^{-1}(\el)$, where $\el\in V(T)$ is a leaf that is in the same connected component of $T$ as $x$, if $e$ is removed.
Now the \emph{width} of $e$ (with respect to $\cutrank_G$) is $\cutrank_G(X_s,X_t)$ and the \emph{rankwidth} of the decomposition $(T,\eta)$ is the maximum width over all edges of $T$.
The \emph{rankwidth} of $G$ is the minimum rankwidth of all branch decompositions of $G$.
It is well known that the cliquewidth of a complete graph is equal to $1$ and this is also true for the rankwidth. 
For a given graph and fixed $k$ there is an algorithm that finds a branch decomposition of width $k$ in FPT-time (cubic in $|V(G)|$), or reports correctly that none exists \cite{HO08}.

\section{Treewidth Results}\label{sec:tw}

We start with some basic relationships between different restriction parameters for $\RCm$, especially the treewidths of the different graphs for a given instance.
Similar relationships have been determined for the three graphs in the context of constraint satisfaction.
\begin{remark}
$\twp\geq\max_i|\validjobs{i}|$ and $\twd\geq\max_j|\validmachs{j}|$.
\end{remark}
To see this note that the sets $\validjobs{i}$ and $\validmachs{j}$ are cliques in the primal and dual graphs, respectively.
\begin{remark}\label{rem:twpdi}
$\twi\leq\twp+1$ and $\twi\leq\twd+1$.
On the other hand $\twp\leq (\twi+1)\max_i |\validjobs{i}|-1$ and $\twd\leq (\twi+1)\max_j |\validmachs{j}|-1$.
\end{remark}
These properties were pointed out by Kalaitis and Vardi \cite{KV98} in a different context.
Note that this Remark together with Theorem \ref{thm:tw_primal_result_FPT} implies the results of Theorem \ref{thm:tw_dual_inst_result_FPT} concerning the parameter $\max_i|\validjobs{i}|$.
Furthermore, in the case of $\PCm$ with only $1$ job and $m$ machines, or $n$ jobs and only $1$ machine the primal graph has treewidth $0$ or $n-1$ and the dual $m-1$ or $0$, respectively, while the incidence graph in both cases  has treewidth $1$.

\subsection*{Dynamic Programs}\label{sec:tw_dyn}

We show how a tree decomposition $(T, \sett{X_t}{t\in V(T)} )$ of width $k$ for any one of the three graphs can be used to design a dynamic program for the corresponding instance $I$ of $\RCm$.
Selecting a node as the root of the decompostion, the dynamic program works in a bottom-up manner from the leaves to the root.
We assume that the decomposition has the following simple form:
For each leaf node $t\in V(T)$ the bag $X_t$ is empty and we fix one of these nodes as the root $a$ of $T$.
Furthermore each internal node $t$ has exactly two children $\el(t)$ and $\er(t)$ (left and right), and each node $t\neq a$ has one parent $\pe(t)$.
We denote the descendants of $t$ with $\desc(t)$.
A decomposition of this form can be generated from any other one without increasing the width and growing only linearly in size through the introduction of dummy nodes.
The bag of a dummy node is either empty or identical to the one of its parent.
 
For each of the graphs and each node $t\in V(T)$ we define sets $\iajobs{t}\subseteq\jobs$ and $\iamachs{t}\subseteq\machs$ of \emph{inactive} jobs and machines along with sets $\ajobs{t}$ and $\amachs{t}$ of \emph{active} jobs and machines.
The active jobs and machines in each case are defined based on the respective bag $X_t$, and the inactive ones have the property that they were active for a descendant $t\in\desc(t)$ of $t$ but are not at $t$.
In addition there are \emph{nearly inactive} jobs $\niajobs{t}$ and machines $\niamachs{t}$, which are the jobs and machines that are deactivated when going from $t$ to its parent $\pe(t)$ (for $t=a$ we assume them to be empty).
The sets are defined so that certain conditions hold.
The first two are that the (nearly) inactive jobs may only be processed on active or inactive machines, and the (nearly) inactive machines can only process active or inactive jobs:
\begin{align}
\validmachs{\iajobs{t}\cup\niajobs{t}} &\subseteq \amachs{t} \cup \iamachs{t}\label{eq:tw_nia_ia_jobs}\\
\validjobs{\iamachs{t}\cup\niamachs{t}} &\subseteq \ajobs{t} \cup \iajobs{t}\label{eq:tw_nia_ia_machs}
\end{align} 
Where $\validmachs{J^*}=\bigcup_{j\in J^*}\validmachs{j}$ and $\validjobs{M^*}=\bigcup_{i\in M^*}\validjobs{i}$ for any sets $J^*\subseteq\jobs$ and $M^*\subseteq\machs$.
Furthermore the (nearly) inactive jobs and machines of the children of an internal $t$ form a disjoint union of the inactive jobs and machines of $t$, respectively:
\begin{align}
\iajobs{t}  &= \iajobs{\el(t)}\ \dot{\cup}\ \niajobs{\el(t)}\ \dot{\cup}\ \iajobs{\er(t)}\ \dot{\cup}\ \niajobs{\er(t)}\label{eq:tw_disjoint_union_jobs}\\
\iamachs{t} &= \iamachs{\el(t)}\ \dot{\cup}\ \niamachs{\el(t)}\ \dot{\cup}\ \iamachs{\er(t)}\ \dot{\cup}\ \niamachs{\er(t)}\label{eq:tw_disjoint_union_machs}
\end{align}
Where $A\ \dot{\cup}\ B$ for any two sets $A,B$ emphasizes that the union $A\cup B$ is disjoint, i.e., $A\cap B=\emptyset$. 
Now at each node of the decomposition the basic idea is to perform three steps:
\begin{enumerate}
\item \label{enum:tw_step_combine1} Combine the information from the children (for internal nodes).
\item \label{enum:tw_step_sched} Consider the nearly inactive jobs and machines:
\begin{itemize}
\item \emph{Primal and incidence graph:} Try all possible ways of scheduling active jobs on nearly inactive machines.
\item \emph{Dual and incidence graph:} Try all possible ways of scheduling nearly inactive jobs on active machines.
\end{itemize}
\item \label{enum:tw_step_combine2} Combine the information from the last two steps.
\end{enumerate}
For the second step the dynamic programs described in Section \ref{sec:prelim} are used as subprocedures.
We now consider each of the three graphs.

\paragraph{The primal graph.}

In the primal graph all the vertices are jobs, and we define the active jobs of a tree node $t$ to be exactly the jobs that are included in the respective bag, i.e., $\ajobs{t}=X_t$.
The inactive jobs are those that are not included in $X_t$ but are in a bag of some descendant of $t$ and the nearly inactive one are those that are active at $t$ but inactive at $\pe(t)$, i.e., $\iajobs{t}=\sett{j\in\jobs}{j\not\in X_t \wedge \exists t'\in \desc(t): j\in X_{t'}}$ and $\niajobs{t}=\ajobs{t}\setminus\ajobs{\pe(t)}$.
Moreover the inactive machines are the ones on which some inactive job may be processed, and the (nearly in-)active machines are those that can process (nearly in-)active jobs and are not inactive, i.e., $\iamachs{t}=\validmachs{\iajobs{t}}$, $\amachs{t}=\validmachs{\ajobs{t}}\setminus\iamachs{t}$ and $\niamachs{t}=\validmachs{\niajobs{t}}\setminus\iamachs{t}$.
For these definitions we get:
\begin{lemma}\label{lem:tw_pg_conditions}
The conditions (\ref{eq:tw_nia_ia_jobs})-(\ref{eq:tw_disjoint_union_machs}) hold, as well as:
\begin{align}
\validjobs{\niamachs{t}} &\subseteq \ajobs{t}\label{eq:tw_pg_niamachs} \\
\validmachs{\iajobs{t}\cup\niajobs{t}} &= \niamachs{t}\cup\iamachs{t} \label{eq:tw_pg_niajobs}
\end{align}
\end{lemma}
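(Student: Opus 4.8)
The plan is to verify each of the listed conditions directly from the definitions of the active, inactive, and nearly inactive sets for the primal graph, using the structural properties of tree decompositions. The key observation I would repeatedly invoke is that $X_t \subseteq \ajobs{t} \cup \iajobs{t}$ for every node $t$, and more importantly that every job that appears in a bag of a descendant of $t$ is either still in $X_t$ (hence active) or has left $X_t$ (hence inactive): this is exactly the content of condition~\ref{enum:tree_decomp_connected} of a tree decomposition, since the nodes containing a fixed job form a connected subtree, so once a job leaves the bag on the path toward the root it never comes back. I would also note that $\niajobs{t} \subseteq \ajobs{t}$ by definition, which immediately gives that the union on the left-hand side of~\eqref{eq:tw_nia_ia_jobs} and~\eqref{eq:tw_pg_niajobs} is just $\iajobs{t} \cup \ajobs{t}$ restricted appropriately.

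I would take the conditions roughly in this order. First, \eqref{eq:tw_pg_niamachs}: a machine $i \in \niamachs{t}$ satisfies $i \in \validmachs{\niajobs{t}}$, so $i \in \validmachs{j}$ for some nearly inactive (hence active) job $j$; thus $j \in \validjobs{i}$ and any job $j' \in \validjobs{i}$ shares machine $i$ with $j$, so $\set{j,j'}$ is an edge of the primal graph. Since $j \in X_t$, condition~\ref{enum:tree_decomp_charedbag} would not quite suffice on its own, so instead I argue via cliques: $\validjobs{i}$ is a clique in the primal graph (every two of its jobs share machine $i$), hence by the clique property of tree decompositions recalled in the Preliminaries there is a node whose bag contains all of $\validjobs{i}$; combined with the fact that $\validjobs{i}$ meets $X_t$ and connectivity, one deduces $\validjobs{i} \subseteq \ajobs{t} \cup \iajobs{t}$, and since $i \notin \iamachs{t} = \validmachs{\iajobs{t}}$ means $i$ processes no inactive job, actually $\validjobs{i} \cap \iajobs{t} = \emptyset$, giving $\validjobs{i} \subseteq \ajobs{t}$. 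This is the delicate point and I expect it to be the main obstacle: pinning down exactly why $\niamachs{t}$-machines touch only active jobs requires carefully combining the clique-in-bag property with the connectivity axiom, and getting the case analysis clean (as opposed to merely $\subseteq \ajobs{t} \cup \iajobs{t}$) is where the argument has to be careful.

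Next I would prove~\eqref{eq:tw_pg_niajobs}. The inclusion $\subseteq$ is immediate: $\amachs{t} = \validmachs{\ajobs{t}} \setminus \iamachs{t}$ and $\iamachs{t} = \validmachs{\iajobs{t}}$, while $\niamachs{t} = \validmachs{\niajobs{t}} \setminus \iamachs{t} \subseteq \validmachs{\ajobs{t}} \setminus \iamachs{t} = \amachs{t}$, so $\niamachs{t} \cup \iamachs{t} \subseteq \validmachs{\ajobs{t}} \cup \validmachs{\iajobs{t}}$; and $\validmachs{\iajobs{t} \cup \niajobs{t}} = \validmachs{\iajobs{t}} \cup \validmachs{\niajobs{t}} = \iamachs{t} \cup \validmachs{\niajobs{t}} = \iamachs{t} \cup \niamachs{t}$ by the very definition of $\niamachs{t}$. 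For the reverse inclusion I would spell out that $\amachs{t} \subseteq \validmachs{\ajobs{t}} = \validmachs{X_t}$, and check that $\niajobs{t} = \ajobs{t} \setminus \ajobs{\pe(t)}$ captures exactly the jobs whose machines become inactive, which together with~\eqref{eq:tw_pg_niamachs} already established gives equality. This also yields~\eqref{eq:tw_nia_ia_jobs} and~\eqref{eq:tw_nia_ia_machs}: \eqref{eq:tw_nia_ia_jobs} is weaker than~\eqref{eq:tw_pg_niajobs}, and~\eqref{eq:tw_nia_ia_machs} follows since $\iamachs{t} \cup \niamachs{t} \subseteq \validmachs{\iajobs{t} \cup \niajobs{t}}$ and each such machine processes only jobs from $\ajobs{t} \cup \iajobs{t}$ by the same clique-and-connectivity argument used for~\eqref{eq:tw_pg_niamachs}.

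Finally, the disjoint-union conditions~\eqref{eq:tw_disjoint_union_jobs} and~\eqref{eq:tw_disjoint_union_machs}. For jobs: a job is inactive at $t$ iff it lies in some descendant bag but not in $X_t$; the descendants of $t$ are the disjoint union of $\set{\el(t)}$, $\desc(\el(t))$, $\set{\er(t)}$, $\desc(\er(t))$, and by connectivity a job not in $X_t$ that appears below $\el(t)$ cannot also appear below $\er(t)$ (the path between those occurrences would pass through $X_t$), so the four sets $\iajobs{\el(t)}, \niajobs{\el(t)}, \iajobs{\er(t)}, \niajobs{\er(t)}$ partition $\iajobs{t}$ — here $\niajobs{c} = \ajobs{c} \setminus \ajobs{t}$ for a child $c$ accounts for jobs active at $c$ but not at $t$, and these are disjoint from $\iajobs{c}$ since the latter excludes $X_c = \ajobs{c}$. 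The machine version~\eqref{eq:tw_disjoint_union_machs} then follows by applying $\validmachs{\cdot}$ to~\eqref{eq:tw_disjoint_union_jobs} and using $\iamachs{t} = \validmachs{\iajobs{t}}$, $\niamachs{c} \cup \iamachs{c} = \validmachs{\iajobs{c} \cup \niajobs{c}}$, together with the fact that the machine sets inherit disjointness from their defining job sets via~\eqref{eq:tw_pg_niamachs} (a machine processing an active job of one subtree cannot be purely below the other). I would close by remarking that all of these verifications are routine set manipulations once the connectivity axiom has been invoked, so the only genuinely subtle step is the one isolated above.
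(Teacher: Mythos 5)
Your argument is correct and, in substance, follows the same route as the paper: every condition is verified directly from the definitions, with (\ref{eq:tw_pg_niajobs}) (and hence (\ref{eq:tw_nia_ia_jobs})) reducing to the set identity $\validmachs{\iajobs{t}}\cup(\validmachs{\niajobs{t}}\setminus\iamachs{t})=\iamachs{t}\cup\validmachs{\niajobs{t}}$ --- so your separate ``reverse inclusion'' step is superfluous --- and the disjointness claims coming from the connectivity condition (iii). The one genuine divergence is in the only nontrivial step, (\ref{eq:tw_nia_ia_machs}) and (\ref{eq:tw_pg_niamachs}): the paper fixes $j\in\validjobs{i}$ together with the witness job $j'$ (inactive or nearly inactive, respectively), observes that $\set{j,j'}$ is a primal edge, and applies the shared-bag condition (ii) to that single edge, whereas you apply the Helly-type clique-containment property to all of $\validjobs{i}$ at once. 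Both work, and for the same underlying reason, which you should state more explicitly than ``$\validjobs{i}$ meets $X_t$ and connectivity'': what forces the bag containing the edge (resp.\ the clique) into the subtree below $t$ is that the witness $j'$ is absent from $X_{\pe(t)}$ (case $i\in\niamachs{t}$) or from $X_t$ itself (case $i\in\iamachs{t}$), so $T_{j'}$ cannot reach an ancestor of $t$; a clique merely meeting $X_t$ through a job surviving into the parent's bag could sit in an ancestor's bag. Since you do take $j'$ nearly inactive and explicitly flag this case analysis as the delicate point, the argument goes through. One smaller repair: for the disjointness in (\ref{eq:tw_disjoint_union_machs}), citing (\ref{eq:tw_pg_niamachs}) alone is not enough (it says nothing about machines in $\iamachs{\el(t)}$); the clean argument is the paper's --- a machine shared between the two sides yields jobs $j_\el,j_\er$ on opposite sides forming a primal edge, whose common bag contradicts condition (iii) --- or, equivalently, apply (\ref{eq:tw_nia_ia_machs}) at both children to the same machine. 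Either fix is routine, so I would classify these as presentation gaps rather than missing ideas.
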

\begin{proof}
(\ref{eq:tw_nia_ia_jobs}) and (\ref{eq:tw_pg_niajobs}): 
\[
\validmachs{\iajobs{t}\cup\niajobs{t}} = \validmachs{\iajobs{t}}\cup\validmachs{\niajobs{t}} = \iamachs{t} \cup (\validmachs{\niajobs{t}}\setminus\iamachs{t}) = \iamachs{t} \cup \niamachs{t}
\]
This yields (\ref{eq:tw_pg_niajobs}) and (\ref{eq:tw_pg_niajobs}) implies (\ref{eq:tw_nia_ia_jobs}).
%Let $j\in\iajobs{t}\cup\niajobs{t}$ and $i\in\validmachs{j}$.
%If $j\in\iajobs{t}$, we have $i\in\iamachs{t}$ by definition, and if $j\in\niajobs{t}$, $i\not\in\iamachs{t}$ implies $i\in\niamachs{t}$.

(\ref{eq:tw_nia_ia_machs}) and (\ref{eq:tw_pg_niamachs}):
Let $i\in\iamachs{t}\cup\niamachs{t}$ and $j\in\validjobs{i}$.
We first consider the case that $i\in\iamachs{t}$. 
Then there is a job $j'\in\iajobs{t}$ with $i\in\validmachs{j'}$.
If $j=j'$, we have $j\in\iajobs{t}$ and otherwise $\set{j,j'}\in E(G)$.
Because of (T2) there is a node $t'\in V(T)$ with $j,j'\in\ajobs{t'}$.
Since $j'\in\iajobs{t}$, we have $j'\not\in\ajobs{t}$.
This together with (T3) gives $t'\in\desc(t)$.
Now $j\not\in\ajobs{t}$ implies $j\in\iajobs{t}$.
Therefore we have $j\in\ajobs{t}\cup\iajobs{t}$.
Next we consider he case that $i\in\niamachs{t}$. 
In this case there is a job $j'\in\niajobs{t}$ with $i\in\validmachs{j'}$ and for each job $j''\in\iajobs{t}$ we have $i\not\in\validmachs{j''}$.
If $j=j'$ we have $j\in\ajobs{t}$ and otherwise $\set{j,j'}\in E(G)$.
Because of (T2) there is again a node $t'\in V(T)$ with $j,j'\in\ajobs{t'}$.
Since $j,j'\not\in\iajobs{t}$, $j'\not\in\ajobs{\pe(t)}$ and $j'\in\ajobs{t}$ we get $j\in\ajobs{t}$ using (T3).
This also implies (\ref{eq:tw_pg_niamachs}).

(\ref{eq:tw_disjoint_union_jobs}):
All but $(\iajobs{\el(t)}\cup\niajobs{\el(t)})\cap(\iajobs{\er(t)}\cup\niajobs{\er(t)})=\emptyset$ follows directly from the definitions.
Assuming there is a job $j\in(\iajobs{\el(t)}\cup\niajobs{\el(t)})\cap(\iajobs{\er(t)}\cup\niajobs{\er(t)})$ we get $j\in\ajobs{t}$ because of (T3), yielding a contradiction.

(\ref{eq:tw_disjoint_union_machs}):
Because of (\ref{eq:tw_disjoint_union_jobs}) and the definitions we get $\iamachs{t}=\iamachs{\el(t)}\cup\niamachs{\el(t)}\cup\iamachs{\er(t)}\cup\niamachs{\er(t)}$, and $\iamachs{s(t)}\cap\niamachs{s(t)}$ for $s\in\eler$ is clear by definition.
Therefore it remains to show $(\iamachs{\el(t)}\cup\niamachs{\el(t)})\cap(\iamachs{\er(t)}\cup\niamachs{\er(t)})=\emptyset$.
We assume that there is a machine $i$ in this cut.
Then there are jobs $j_s\in\iajobs{s(t)}\cup\niajobs{s(t)}$ for $s\in\eler$ with $i\in\validmachs{j_s}$.
We have $\set{j_\el,j_\er}\in E$ and because of (T2) there is a node $t'$ with $j_\el,j_\er\in\ajobs{t'}$.
Because of $j_\el,j_\er\not\in\ajobs{t}$ and (T3) we have a contradiction.
\end{proof}

For $J\subseteq\jobs$ and $M\subseteq\machs$ let $\Gamma(J,M)=\sett{J'\subseteq J}{\forall j\in J':\validmachs{j}\cap M\neq\emptyset}$.
Let $t\in V(T)$, $J\in\Gamma(\ajobs{t},\iamachs{t})$ and $J'\in\Gamma(\ajobs{t}\setminus\niajobs{t},\iamachs{t}\cup\niamachs{t})$.
We set $\sched(t,J)$ and $\niasched(t,J')$ to be the sets of feasible schedules for the instances $I[\iajobs{t}\cup J,\iamachs{t}]$ and $I[\iajobs{t}\cup\niajobs{t}\cup J', \iamachs{t}\cup\niamachs{t}]$ respectively.
We will consider $\Opt(\sched(t,J))$ and $\Opt(\niasched(t,J'))$.

First note that $\Opt(I)=\Opt(\sched(a,\emptyset))$, where $a$ is the root of $T$.
Moreover, for a leaf node $t$ there are neither jobs nor machines and $\Opt(\sched(t,\emptyset))=\Opt(\niasched(t,\emptyset))=\Opt(\set{\emptyset})=0$ holds.
Hence let $t$ be a non-leaf node.
We first consider how $\Opt(\sched(t,J))$ can be computed from the children of $t$ (Step \ref{enum:tw_step_combine1}).
Due to Property \ref{enum:tree_decomp_connected} of the tree decomposition and (\ref{eq:tw_nia_ia_jobs}) the jobs from $J$ are already active on at least one of the direct descendants of $t$.
Because of this and (\ref{eq:tw_disjoint_union_machs}), $J$ may be split in two parts $J_{\el}\dot{\cup}J_{\er}=J$, where $J_s\in\Gamma(\ajobs{s(t)}\setminus\niajobs{s(t)},\iamachs{s(t)}\cup\niamachs{s(t)})$ for $s\in\eler$.
Let $\Phi(J)$ be the set of such pairs $(J_{\el},J_{\er})$.
From (\ref{eq:tw_disjoint_union_jobs}), (\ref{eq:tw_disjoint_union_machs}) and (\ref{eq:tw_pg_niajobs}) we get:
\begin{lemma}\label{lem:tw_pg_reccurence1}
$\Opt\parens{\sched(t,J)} = \min_{(J_{\el},J_{\er})\in \Phi(J)}\max_{s\in\eler} \Opt(\niasched(s(t),J_s))$.
\end{lemma}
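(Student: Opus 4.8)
The plan is to prove the recurrence by exhibiting a two-way correspondence between feasible schedules for the instance $I[\iajobs{t}\cup J,\iamachs{t}]$ and pairs of feasible schedules for the two children's instances, in a way that makes the makespan transform into a maximum over the two sides and a minimum over the choice of split. First I would fix a node $t$ with children $\el(t),\er(t)$, write $s$ for a generic child, and unpack the relevant index sets using the structural identities already established: $\iajobs{t}$ and $\iamachs{t}$ decompose as the disjoint unions (\ref{eq:tw_disjoint_union_jobs}) and (\ref{eq:tw_disjoint_union_machs}) over the $(\text{in})$active and nearly inactive sets of the two children. The key observation is that, by (\ref{eq:tw_pg_niajobs}), every job that is inactive or nearly inactive at a child $s(t)$ can only run on machines in $\iamachs{s(t)}\cup\niamachs{s(t)}$, and these machine sets are pairwise disjoint across the two children by (\ref{eq:tw_disjoint_union_machs}); symmetrically (\ref{eq:tw_pg_niamachs}) and (\ref{eq:tw_nia_ia_machs}) say the machines handled on the left cannot see jobs that live purely on the right. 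Hence the instance on $\iajobs{t}\cup J$ literally splits as a disjoint union of the two subinstances $I[\iajobs{s(t)}\cup\niajobs{s(t)}\cup J_s,\ \iamachs{s(t)}\cup\niamachs{s(t)}]$, once we decide how to partition $J$ into $J_\el\dot\cup J_\er$.

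The argument then goes in two directions. For the $\leq$ direction, given any $(J_\el,J_\er)\in\Phi(J)$ and feasible schedules $\sigma_s$ for the child instances witnessing $\Opt(\niasched(s(t),J_s))$, I would glue them into a single schedule $\sigma$ on $I[\iajobs{t}\cup J,\iamachs{t}]$: this is well-defined because the job domains $\iajobs{\el(t)}\cup\niajobs{\el(t)}\cup J_\el$ and $\iajobs{\er(t)}\cup\niajobs{\er(t)}\cup J_\er$ are disjoint and cover $\iajobs{t}\cup J$ (by (\ref{eq:tw_disjoint_union_jobs}) and $J=J_\el\dot\cup J_\er$), and each $\sigma_s$ only uses machines in the corresponding disjoint machine block, so loads do not interfere; feasibility ($\sigma(j)\in\validmachs{j}$) is inherited. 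Its makespan is $\max_{s}\Cmax(\sigma_s)$, so taking the best split gives the $\leq$ inequality. For the $\geq$ direction, I would start from an optimal $\sigma$ for $I[\iajobs{t}\cup J,\iamachs{t}]$, set $J_s:=\sigma^{-1}(\iamachs{s(t)}\cup\niamachs{s(t)})\cap J$, and check that $(J_\el,J_\er)\in\Phi(J)$: the two sets are disjoint because the machine blocks are disjoint, and they cover $J$ because every $j\in J$ is scheduled somewhere in $\iamachs{t}=\bigcup_s(\iamachs{s(t)}\cup\niamachs{s(t)})$; moreover each $J_s$ really lies in $\Gamma(\ajobs{s(t)}\setminus\niajobs{s(t)},\iamachs{s(t)}\cup\niamachs{s(t)})$, which needs the fact (from (\ref{eq:tw_nia_ia_jobs}) applied at the child, i.e. (\ref{eq:tw_pg_niajobs})) that inactive/nearly inactive jobs of $s(t)$ already sit inside the child and hence are not in $J$, so $J_s\subseteq\ajobs{s(t)}\setminus\niajobs{s(t)}$. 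Restricting $\sigma$ to each child's job domain gives feasible schedules for the child instances whose makespans are at most $\Cmax(\sigma)$, yielding $\geq$.

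The step I expect to be the main obstacle is the careful bookkeeping that shows the restriction of $\sigma$ to a child's job set is a \emph{feasible} schedule for exactly the child subinstance $I[\iajobs{s(t)}\cup\niajobs{s(t)}\cup J_s,\ \iamachs{s(t)}\cup\niamachs{s(t)}]$ — in particular that no job of that subinstance is mapped by $\sigma$ to a machine outside $\iamachs{s(t)}\cup\niamachs{s(t)}$ (otherwise the restriction would not be a total function into the child's machine set). This is precisely where (\ref{eq:tw_pg_niajobs}), (\ref{eq:tw_pg_niamachs}), and the disjointness identities (\ref{eq:tw_disjoint_union_jobs})--(\ref{eq:tw_disjoint_union_machs}) are used in combination: (\ref{eq:tw_pg_niajobs}) pins the inactive and nearly inactive jobs of $s(t)$ to $\iamachs{s(t)}\cup\niamachs{s(t)}$, the definition of $J_s$ pins the $J$-part there by construction, and disjointness of the two machine blocks guarantees the two restrictions do not overlap and reassemble to $\sigma$. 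Once this is in place the makespan identity $\Cmax(\sigma)=\max_s\Cmax(\sigma\restriction_s)$ is immediate because the load on any machine is entirely determined within its own block, and combining both directions gives the claimed equality.
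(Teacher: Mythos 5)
Your proposal is correct and follows essentially the same route as the paper's proof: split $J$ according to which child's machine block $\iamachs{s(t)}\cup\niamachs{s(t)}$ an optimal schedule uses (giving the $\geq$ direction via restriction and (\ref{eq:tw_pg_niajobs})), and glue optimal child schedules for the best split using the disjointness identities (\ref{eq:tw_disjoint_union_jobs})--(\ref{eq:tw_disjoint_union_machs}) (giving the $\leq$ direction). The bookkeeping step you flag as the main obstacle is exactly the point the paper handles with (\ref{eq:tw_pg_niajobs}) and the observation that $J\cap\niajobs{s(t)}=\emptyset$, so nothing is missing.
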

\begin{proof}
Let $\sigma^*\in\sched(t,J)$ be optimal.
Since $J\subseteq\ajobs{t}$, we have $J\cap\niajobs{s(t)}=\emptyset$ for $s\in\eler$.
Let $J^*_s=\sigma'^{*-1}(\iamachs{s(t)}\cup\niamachs{s(t)})\cap J$.
Because of (\ref{eq:tw_disjoint_union_machs}) we have $J=J^*_\el \dot{\cup} J^*_\er$ and $J^*_s\in\Gamma(\ajobs{s(t)}\setminus\niajobs{s(t)},\iamachs{s(t)}\cup\niamachs{s(t)})$ obviously holds.
Let $\sigma^*_s=\sigma^*|_{J'_s\cup\niajobs{s(t)}}$.
Because of (\ref{eq:tw_pg_niajobs}) we have $\sigma^*_s\in\niasched(s(t),J^*_s)$ and (\ref{eq:tw_disjoint_union_jobs}) implies $\sigma^* = \sigma^*_\el\cup\sigma^*_\er$.
This yields:
\begin{align*}
\Opt(\sched(t,J)) 	&= \Cmax(\sigma^*)\\
					&= \max_{s\in\eler}\Cmax(\sigma^*_s)\\
					&\geq \max_{s\in\eler} \Opt(\niasched(s(t),J^*_s))\\
					&\geq \min_{(J_{\el},J_{\er})\in \Phi(J)}\max_{s\in\eler} \Opt(\niasched(s(t),J_s))
\end{align*}
Now let $(J_{\el},J_{\er})\in \Phi(J)$ minimizing the righthand side of the equation and $\sigma_s\in\niasched(s(t),J_s)$ optimal.
Then (\ref{eq:tw_disjoint_union_jobs}) and (\ref{eq:tw_disjoint_union_machs}) imply that $\sigma:=\sigma_\el\cup\sigma_\er$ is in $S(t,J)$.
Therefore we have $\Cmax(\sigma)\geq\Cmax(\sigma^*)$.
Since $\Cmax(\sigma)$ also equals the right hand side of the equation the claim follows.
\end{proof}
Consider the computation of $\Opt(\niasched(t,J'))$ (Step \ref{enum:tw_step_combine2}).
We may split $J'$ and $\niajobs{t}$ into a set going to the nearly inactive and a set going to the inactive machines.
We set $\Psi(J')$ to be the set of pairs $(A,X)$ with $J'\cup\niajobs{t}=A\dot{\cup} X$, $A\in\Gamma(\niajobs{t}\cup J',\niamachs{t})$ and $X\in\Gamma(\niajobs{t}\cup J',\iamachs{t})$.
Because of (\ref{eq:tw_disjoint_union_jobs})-(\ref{eq:tw_pg_niamachs}) we have:
\begin{lemma}\label{lem:tw_pg_reccurence2}
$\Opt\parens{ \niasched(t,J') } =  \min_{(A,X)\in\Psi(J')} \max\set{ \Opt\parens{\sched(t,X)}  , \Opt\parens{I[A,\niamachs{t}]} }$.
\end{lemma}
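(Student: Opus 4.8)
The plan is to prove the two inequalities separately, in both directions decomposing a schedule along the machine partition $\iamachs{t}\cup\niamachs{t}=\iamachs{t}\,\dot{\cup}\,\niamachs{t}$ (the union is disjoint directly by the definition $\niamachs{t}=\validmachs{\niajobs{t}}\setminus\iamachs{t}$). Any feasible schedule $\sigma$ for $I[\iajobs{t}\cup\niajobs{t}\cup J',\iamachs{t}\cup\niamachs{t}]$ then splits uniquely as $\sigma=\sigma_{\mathrm{in}}\,\dot{\cup}\,\sigma_{\mathrm{out}}$, where $\sigma_{\mathrm{in}}$ is the restriction to the jobs placed on $\iamachs{t}$ and $\sigma_{\mathrm{out}}$ the restriction to the jobs placed on $\niamachs{t}$, with $\Cmax(\sigma)=\max\set{\Cmax(\sigma_{\mathrm{in}}),\Cmax(\sigma_{\mathrm{out}})}$.

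For ``$\ge$'' I would take an optimal $\sigma^*\in\niasched(t,J')$, let $A$ be the set of jobs it sends to $\niamachs{t}$, and put $X:=(\niajobs{t}\cup J')\setminus A$. The crucial observation is that $(A,X)\in\Psi(J')$: by (\ref{eq:tw_pg_niamachs}) every job scheduled on a nearly inactive machine lies in $\ajobs{t}=X_t$, hence not in $\iajobs{t}$, so $A\subseteq\niajobs{t}\cup J'$ (recall $J'\subseteq\ajobs{t}\setminus\niajobs{t}$) and all of $\iajobs{t}$ is placed on $\iamachs{t}$; then $\sigma^*_{\mathrm{in}}$ is a feasible schedule for $I[\iajobs{t}\cup X,\iamachs{t}]$ witnessing $X\in\Gamma(\ajobs{t},\iamachs{t})$, while $\sigma^*_{\mathrm{out}}$ is one for $I[A,\niamachs{t}]$ witnessing $A\in\Gamma(\niajobs{t}\cup J',\niamachs{t})$. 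Bounding $\Cmax(\sigma^*_{\mathrm{in}})$ and $\Cmax(\sigma^*_{\mathrm{out}})$ below by $\Opt(\sched(t,X))$ and $\Opt(I[A,\niamachs{t}])$ gives the inequality. For ``$\le$'' I would fix $(A,X)\in\Psi(J')$ attaining the minimum together with optimal schedules $\sigma_1\in\sched(t,X)$ and $\sigma_2$ for $I[A,\niamachs{t}]$ --- both sets are nonempty because $\Gamma$-membership supplies each of the jobs in $X$ and $A$ with an admissible target and $\validmachs{\iajobs{t}}=\iamachs{t}$ does so for the inactive jobs. Since $\iamachs{t}$ and $\niamachs{t}$ are disjoint and $\iajobs{t}\cup X$ together with $A$ partitions $\iajobs{t}\cup\niajobs{t}\cup J'$, the union $\sigma_1\,\dot{\cup}\,\sigma_2$ lies in $\niasched(t,J')$ and has makespan $\max\set{\Cmax(\sigma_1),\Cmax(\sigma_2)}$, i.e., the right-hand side.

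I expect the one genuinely delicate point to be the membership $(A,X)\in\Psi(J')$ in the ``$\ge$'' direction: one has to rule out that an arbitrary optimal schedule parks an inactive job on a nearly inactive machine, which is exactly what (\ref{eq:tw_pg_niamachs}) guarantees. Everything else is the same disjoint-union-of-schedules bookkeeping already used in the proof of Lemma \ref{lem:tw_pg_reccurence1}, so I do not anticipate further obstacles.
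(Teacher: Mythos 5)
Your proof is correct and follows essentially the same route as the paper's: both directions decompose an optimal schedule along the split $\iamachs{t}\,\dot{\cup}\,\niamachs{t}$, use (\ref{eq:tw_pg_niamachs}) to establish $(A,X)\in\Psi(J')$ in the ``$\geq$'' direction, and glue two optimal sub-schedules back together for ``$\leq$''. The delicate point you single out is exactly the one the paper's proof rests on, so there is nothing to add.
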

\begin{proof}
Let $\sigma^{*}\in\niasched(t,J')$ be optimal.
Because of (\ref{eq:tw_pg_niamachs}) we have $\sigma^{*-1}(\niamachs{t})\subseteq J'\cup\niajobs{t}$.
We set $A^*=\sigma^{*-1}(\niamachs{t})$ and $X^*=(J'\cup\niajobs{t})\setminus A$.
Then $(A^*,X^*)\in\Psi(J')$.
Let $\check{\sigma}^*=\sigma^*|_{\iajobs{t}\cup X^*}$ and $\tilde{\sigma}^*=\sigma^*|_{A}$.
Then $\check{\sigma}^*\in\sched(t,X^*)$ and $\tilde{\sigma}^*$ is a feasible schedule for $I[A^*,\niamachs{t}]$.
Because of (\ref{eq:tw_disjoint_union_jobs}) and (\ref{eq:tw_disjoint_union_machs}), we have $\sigma=\check{\sigma}^*\ \dot{\cup}\ \tilde{\sigma}^*$ and:
\begin{align*}
\Opt(\niasched(t,J'))	&= \Cmax(\sigma^*)\\
						&= \max\set{ \Cmax(\check{\sigma}^*), \Cmax(\tilde{\sigma}^*)}	\\
						&\geq \max\set{\Opt\parens{\sched(t,X^*)}  , \Opt\parens{I[A^*,\niamachs{t}]}}\\
						&\geq \min_{(A,X)\in\Psi(J')} \max\set{ \Opt\parens{\sched(t,X)}  , \Opt\parens{I[A,\niamachs{t}]} }
\end{align*}
Now let $(A,X)\in\Psi(J')$ minimizing the right hand side of the equation, $\check{\sigma}\in\sched(t,X)$ and $\tilde{\sigma}$ a feasible schedule for $I[A,\niamachs{t}]$.
Then (\ref{eq:tw_disjoint_union_jobs}) and (\ref{eq:tw_disjoint_union_machs}) yield $\sigma:=\check{\sigma}\ \dot{\cup}\  \tilde{\sigma}\in\niasched(t,J')$, and therefore $\Cmax(\sigma)\geq \Cmax(\sigma^*)$.
Since $\Cmax(\sigma)$ also equals the right hand side of the equation, the claim follows.
\end{proof}
Determining the values $\Opt\parens{I[A,\niamachs{t}]}$ corresponds to Step \ref{enum:tw_step_sched}.
Note that these values can be computed using the first dynamic program from Section \ref{sec:prelim} in time $2^{\Oh(k)}\times\Oh(m)$.

\paragraph{The dual graph.}

For the dual graph the (in-)active jobs and machines are defined dually:
The active machines for a tree node $t$ are the ones in the respective bag, the inactive machines are those that were active for some descendant but are not active for $t$, and the nearly inactive machines are those that are active at $t$ but inactive at its parent, i.e., $\amachs{t}=X_t$, $\iamachs{t}=\sett{i\in\machs}{i\not\in\amachs{t}\wedge \exists t'\in\desc(t):i\in X_{t'}}$ and $\niamachs{t}=\amachs{t}\setminus\iamachs{p(t)}$.
Furthermore the inactive jobs are those that may be processed on some inactive machine and the (nearly in-)active ones are those that can be processed on some (nearly in-)active machine and are not inactive, i.e., $\iajobs{t}=\validjobs{\iamachs{t}}$, $\ajobs{t}=\validjobs{\amachs{t}}\setminus\iajobs{t}$ and $\niajobs{t}=\validjobs{\niamachs{t}}\setminus\iajobs{t}$.
With these definitions we get analogously to Lemma \ref{lem:tw_pg_conditions}: 
\begin{lemma}
The conditions (\ref{eq:tw_nia_ia_jobs})-(\ref{eq:tw_disjoint_union_machs}) hold, as well as:
\begin{align}
\validmachs{\niajobs{t}} &\subseteq \amachs{t} \label{eq:tw_dg_niajobs}\\
\validjobs{\iamachs{t}\cup\niamachs{t}} &= \niajobs{t} \cup \iajobs{t} \label{eq:tw_dg_niamachs}
\end{align}
\qed
\end{lemma}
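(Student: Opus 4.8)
The plan is to carry the proof of Lemma~\ref{lem:tw_pg_conditions} over essentially verbatim, exchanging the roles of jobs and machines and replacing the primal graph by the dual graph $G$; since the dual definitions of (in)active and nearly inactive jobs and machines are exactly the duals of the primal ones, every line of the primal argument has a direct counterpart. The identities (\ref{eq:tw_nia_ia_machs}) and (\ref{eq:tw_dg_niamachs}) come first, from the set chain
\[
\validjobs{\iamachs{t}\cup\niamachs{t}} = \validjobs{\iamachs{t}}\cup\validjobs{\niamachs{t}} = \iajobs{t}\cup(\validjobs{\niamachs{t}}\setminus\iajobs{t}) = \iajobs{t}\cup\niajobs{t},
\]
using $\validjobs{\iamachs{t}}=\iajobs{t}$ and $\niajobs{t}=\validjobs{\niamachs{t}}\setminus\iajobs{t}$; the last equality is (\ref{eq:tw_dg_niamachs}), and it immediately gives the inclusion (\ref{eq:tw_nia_ia_machs}).

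Next, for (\ref{eq:tw_nia_ia_jobs}) and (\ref{eq:tw_dg_niajobs}), I would take a machine $i\in\validmachs{j}$ with $j\in\iajobs{t}\cup\niajobs{t}$ and a witness machine $i'$: if $j\in\iajobs{t}=\validjobs{\iamachs{t}}$ choose $i'\in\iamachs{t}$ with $i'\in\validmachs{j}$; if $j\in\niajobs{t}$ choose $i'\in\niamachs{t}$ with $i'\in\validmachs{j}$. If $i=i'$ we are done; otherwise $\set{i,i'}\in E(G)$ because both machines can process $j$, so by~(T2) a bag $X_{t'}$ contains $i$ and $i'$. In the first case $i'\notin\amachs{t}$ and~(T3) force $t'\in\desc(t)$, hence $i\in\amachs{t}\cup\iamachs{t}$; in the second case $j\notin\iajobs{t}$ excludes $i,i'\in\iamachs{t}$, and~(T3) applied using that $i'$ is active at $t$ but not at $\pe(t)$ forces $i\in\amachs{t}$, which is exactly (\ref{eq:tw_dg_niajobs}).

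For the two disjoint-union identities the primitive one is now (\ref{eq:tw_disjoint_union_machs}) — the machines being the vertices of $G$ — and it is proved by the direct argument used for (\ref{eq:tw_disjoint_union_jobs}) in the primal proof: the set equality and the intersections $\iamachs{s(t)}\cap\niamachs{s(t)}=\emptyset$ are routine, while a machine in $(\iamachs{\el(t)}\cup\niamachs{\el(t)})\cap(\iamachs{\er(t)}\cup\niamachs{\er(t)})$ would lie in bags of nodes in both subtrees below $t$ yet in no bag at $t$, contradicting~(T3). Then (\ref{eq:tw_disjoint_union_jobs}) follows from (\ref{eq:tw_disjoint_union_machs}) together with $\iajobs{t}=\validjobs{\iamachs{t}}$ and the definitions — dual to the way (\ref{eq:tw_disjoint_union_machs}) was obtained from (\ref{eq:tw_disjoint_union_jobs}) before — the disjointness of $(\iajobs{\el(t)}\cup\niajobs{\el(t)})$ and $(\iajobs{\er(t)}\cup\niajobs{\er(t)})$ following because two witness machines $i_\el,i_\er$ for a common job are adjacent in $G$, hence share a bag by~(T2), again contradicting~(T3).

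The only place where the transcription is not purely mechanical — and the step I would be most careful about — is keeping track of where the "overlap" terms (such as $\validjobs{\niamachs{t}}\cap\iajobs{t}$) get absorbed, and, in the second and third parts, making sure $\niamachs{t}$ relates to the parent's active machines exactly as $\niajobs{t}=\ajobs{t}\setminus\ajobs{\pe(t)}$ does in the primal case, i.e., that $\niamachs{t}$ consists precisely of the machines active at $t$ but not at $\pe(t)$. Once that correspondence is pinned down, each of the six conditions reduces to its primal counterpart and no further obstacle arises.
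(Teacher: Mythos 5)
Your proof is correct and is exactly what the paper intends: the paper omits the proof of this lemma entirely, stating only that it follows ``analogously to Lemma~\ref{lem:tw_pg_conditions},'' and your line-by-line dualization (jobs $\leftrightarrow$ machines, primal graph $\leftrightarrow$ dual graph) is precisely that analogous argument. You were also right to flag the definition of $\niamachs{t}$ --- the paper's formula $\amachs{t}\setminus\iamachs{p(t)}$ appears to be a typo for ``active at $t$ but not at $\pe(t)$,'' i.e.\ $\amachs{t}\cap\iamachs{\pe(t)}$, and your reading is the one consistent with the surrounding text and needed for the proof to go through.
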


We will need some extra notation.
Like we did in Section \ref{sec:prelim} we will consider load vectors $\lambda\in\ZZ_{\geq 0}^{M}$, where $M\subseteq\machs$ is a set of machines.
We say that a schedule $\sigma$ fulfils $\lambda$, if $\lambda_i=\sum_{j\in\sigma^{-1}(i)}p_{ij}$ for each $i\in M$.
For any set $S$ of schedules for $I$ we denote the set of load vectors for $M$ that are fulfilled by at least one schedule from $S$ with $\Lambda(S,M)$.
Furthermore we denote the set of all schedules for $I$ with $S(I)$, and for a subset of jobs $J\subseteq\jobs$, we write $\Lambda(J,M)$ as a shortcut for $\Lambda(S(I[J,M]),M)$.
Let $t\in V(T)$.
We set $\sched(t)=\sched(I[\iajobs{t},\iamachs{t}\cup\amachs{t}])$ and $\niasched(t)=\sched(I[\iajobs{t}\cup\niajobs{t},\iamachs{t}\cup\amachs{t}])$. 
Moreover, for $\lambda\in \Lambda(\sched(t),\amachs{t})$ and $\lambda'\in \Lambda(\niasched(t),\amachs{t})$ we set $\sched(t,\lambda)\subseteq\sched(t)$ and $\niasched(t,\lambda')\subseteq\niasched(t)$ to be those schedules that fulfil $\lambda$ and $\lambda'$ respectively.
We now consider $\Opt(\sched(t,\lambda))$ and $\Opt(\niasched(t,\lambda'))$.

First note $\Opt(I)=\Opt(S(a,\emptyset))$.
Moreover, for a leaf node $t$ we have neither jobs nor machines and $\Lambda(\sched(t),\amachs{t})=\Lambda(\niasched(t),\amachs{t})=\set{\emptyset}$.
Therefore $\Opt(\sched(t,\emptyset))=\Opt(\niasched(t,\emptyset)) =\Opt(\set{\emptyset})=0$.
Hence let $t$ be a non-leaf node.
Again, we first consider how $\Opt(\sched(t,\lambda))$ can be computed from the children of $t$.
Because of (\ref{eq:tw_disjoint_union_jobs}) $\lambda$ may be split into a left and a right part.
For two machine sets $M,M'$ let $\tau_{M,M'}:\ZZ_{\geq 0}^{M}\rightarrow \ZZ_{\geq 0}^{M'}$ be a trasformation function for load vectors, where the $i$-th entry of $\tau_{M,M'}(\lambda)$ equals $\lambda_i$ for $i\in M\cap M'$ and $0$ otherwise.
We set $\Xi(\lambda)$ to be the set of pairs $(\lambda_{\el},\lambda_{\er})$ with $\lambda=\tau_{\amachs{\el(t)},\amachs{t}}(\lambda_{\el}) + \tau_{\amachs{\er(t)},\amachs{t}}(\lambda_{\er})$, and $\lambda_s\in\Lambda(\niasched(s(t)),\amachs{s(t)})$ for $s\in\eler$.
Because of (\ref{eq:tw_nia_ia_jobs}), (\ref{eq:tw_disjoint_union_jobs}) and (\ref{eq:tw_disjoint_union_machs}), we have:
\begin{lemma}\label{lem:tw_dg_reccurence1}
$\Opt(\sched(t,\lambda))= \min_{(\lambda_\el,\lambda_\er)\in\Xi(\lambda)}\max_{s\in\eler} \Opt(\niasched(s(t),\lambda_s))$.
\end{lemma}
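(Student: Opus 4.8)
The plan is to mirror the proof of Lemma~\ref{lem:tw_pg_reccurence1}, but where the primal proof bookkeeps which active jobs are routed to which side, here I bookkeep which load vector the two sides produce on their active machines. As there, I would prove the two inequalities separately.

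For ``$\geq$'' I would take an optimal $\sigma^*\in\sched(t,\lambda)$, i.e.\ a feasible schedule for $I[\iajobs{t},\iamachs{t}\cup\amachs{t}]$ fulfilling $\lambda$ on $\amachs{t}$. By the disjoint union (\ref{eq:tw_disjoint_union_jobs}) every job of $\iajobs{t}$ lies in exactly one of the two sets $\iajobs{s(t)}\cup\niajobs{s(t)}$, $s\in\eler$; let $\sigma^*_s$ be the restriction of $\sigma^*$ to that set. Condition (\ref{eq:tw_nia_ia_jobs}) applied to the child $s(t)$ guarantees that these jobs are only processable on $\amachs{s(t)}\cup\iamachs{s(t)}$, so $\sigma^*_s$ is feasible for $I[\iajobs{s(t)}\cup\niajobs{s(t)},\iamachs{s(t)}\cup\amachs{s(t)}]$; letting $\lambda_s$ be the load vector it induces on $\amachs{s(t)}$ we obtain $\sigma^*_s\in\niasched(s(t),\lambda_s)$. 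One then checks $(\lambda_\el,\lambda_\er)\in\Xi(\lambda)$: for every $i\in\amachs{t}$ the set $\sigma^{*-1}(i)$ splits into a left and a right part, and by (\ref{eq:tw_disjoint_union_machs}) a left job assigned to $i$ forces $i\in\amachs{\el(t)}$ (and symmetrically on the right), so $\lambda_i$ is exactly $\tau_{\amachs{\el(t)},\amachs{t}}(\lambda_\el)_i+\tau_{\amachs{\er(t)},\amachs{t}}(\lambda_\er)_i$. Together with the makespan identity $\Cmax(\sigma^*)=\max_{s\in\eler}\Cmax(\sigma^*_s)$ discussed below this gives $\Opt(\sched(t,\lambda))=\Cmax(\sigma^*)=\max_s\Cmax(\sigma^*_s)\geq\max_s\Opt(\niasched(s(t),\lambda_s))\geq\min_{(\lambda_\el,\lambda_\er)\in\Xi(\lambda)}\max_s\Opt(\niasched(s(t),\lambda_s))$.

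For ``$\leq$'' I would pick a pair $(\lambda_\el,\lambda_\er)\in\Xi(\lambda)$ attaining the minimum on the right-hand side, together with optimal schedules $\sigma_s\in\niasched(s(t),\lambda_s)$, and set $\sigma:=\sigma_\el\cup\sigma_\er$. By (\ref{eq:tw_disjoint_union_jobs}) the two domains are disjoint with union $\iajobs{t}$, and by (\ref{eq:tw_disjoint_union_machs}) the ranges are contained in $\iamachs{t}\cup\amachs{t}$, so $\sigma$ is a feasible schedule for $I[\iajobs{t},\iamachs{t}\cup\amachs{t}]$. Since $\sigma_s$ places no load on machines outside $\iamachs{s(t)}\cup\amachs{s(t)}$, the defining equation of $\Xi(\lambda)$ shows that $\sigma$ fulfils $\lambda$ on $\amachs{t}$, hence $\sigma\in\sched(t,\lambda)$; the same makespan identity then yields $\Opt(\sched(t,\lambda))\leq\Cmax(\sigma)=\max_s\Cmax(\sigma_s)=\max_s\Opt(\niasched(s(t),\lambda_s))$, which equals the right-hand side. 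Combining the two inequalities proves the claim.

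The feasibility checks and the $\tau$-bookkeeping for $\Xi(\lambda)$ are routine once conditions (\ref{eq:tw_nia_ia_jobs})--(\ref{eq:tw_disjoint_union_machs}) and (\ref{eq:tw_dg_niajobs})--(\ref{eq:tw_dg_niamachs}) are in hand. The step I expect to require the most care is the makespan identity $\Cmax(\sigma)=\max_{s\in\eler}\Cmax(\sigma_s)$: in contrast to the primal case, the instances underlying $\sched(t,\cdot)$ and $\niasched(t,\cdot)$ retain the active machines, so one must check that the loads on all of $\iamachs{t}\cup\amachs{t}$ are reconstructed consistently from the two children. Here one uses (\ref{eq:tw_disjoint_union_machs}) and the connectivity property~\ref{enum:tree_decomp_connected} (a machine occurring in descendant bags on both sides of $t$ must lie in $X_t$), handles the newly deactivated machines $\niamachs{s(t)}$ via (\ref{eq:tw_dg_niamachs}), and observes that an active machine of $t$ not appearing in any descendant bag receives no load in any schedule of $\sched(t)$.
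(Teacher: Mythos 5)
Your plan is the same as the paper's proof of this lemma: split an optimal $\sigma^*$ along the disjoint union (\ref{eq:tw_disjoint_union_jobs}), record the load vectors induced on $\amachs{\el(t)}$ and $\amachs{\er(t)}$, verify membership in $\Xi(\lambda)$, and glue two optimal child schedules in the other direction. Up to the level of detail (the paper simply asserts the feasibility and membership claims that you spell out), the two arguments coincide.

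However, the step you yourself flag as delicate --- the identity $\Cmax(\sigma)=\max_{s\in\eler}\Cmax(\sigma_s)$ --- is exactly where the argument breaks, and your sketch does not close it. The problematic machines are not those in $\iamachs{t}$, nor active machines absent from all descendant bags, but the machines $i\in\amachs{t}\cap\amachs{\el(t)}\cap\amachs{\er(t)}$: Property~\ref{enum:tree_decomp_connected} forces such a machine into $X_t$, but it may still receive jobs from both $\iajobs{\el(t)}\cup\niajobs{\el(t)}$ and $\iajobs{\er(t)}\cup\niajobs{\er(t)}$, so its load in $\sigma$ is the \emph{sum} $(\lambda_\el)_i+(\lambda_\er)_i$, which can strictly exceed both $\Cmax(\sigma_\el)$ and $\Cmax(\sigma_\er)$. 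Concretely, take $\machs=\set{i_1,i_2,i_3}$, unit jobs $j_1,j_2$ with $\validmachs{j_1}=\set{i_1,i_2}$ and $\validmachs{j_2}=\set{i_2,i_3}$, and the decomposition with $X_t=\set{i_2}$, $X_{\el(t)}=\set{i_1,i_2}$, $X_{\er(t)}=\set{i_2,i_3}$: for $\lambda$ with $\lambda_{i_2}=2$ the left-hand side equals $2$, while the unique pair in $\Xi(\lambda)$ sends load $1$ to $i_2$ from each side and the right-hand side evaluates to $1$. So the ``$\leq$'' direction of your argument (and the equation as stated) fails; the ``$\geq$'' direction survives because $\Cmax(\sigma^*)\geq\max_s\Cmax(\sigma^*_s)$ always holds. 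You are in good company: the paper's own proof asserts the same identity with the words ``because of the objective function'' and has the identical gap. The repair is to account for the total active loads explicitly, e.g.\ by adding the terms $\sett{\lambda_i}{i\in\amachs{t}}$ under the max (as Lemma~\ref{lem:tw_dg_reccurence2} already does with $\lambda'_i$), or equivalently by measuring the makespan of these partial schedules only over inactive machines; with that amendment your splitting/gluing bookkeeping goes through.
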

\begin{proof}
Let $\sigma^*\in\sched(t,\lambda)$ be optimal.
Because of (\ref{eq:tw_nia_ia_jobs}) we have $\sigma^*(\iajobs{s(t)}\cup\niajobs{s(t)})\subseteq\amachs{s(t)}\cup\iamachs{s(t)}$ for $s\in\eler$.
Let $\sigma^*_s=\sigma^*|_{\iajobs{s(t)}\cup\niajobs{s(t)}}$ and $\lambda^*_s$ the load vector that $\sigma^*_s$ fulfils on $\amachs{s(t)}$.
Then we have $\sigma^*_s\in\niasched(s(t),\lambda^*_s)$ and $(\lambda^*_\el,\lambda^*_\er)\in\Xi(\lambda)$.
Because of (\ref{eq:tw_disjoint_union_jobs}) and (\ref{eq:tw_disjoint_union_machs}) we have $\sigma^*=\sigma^*_\el\ \dot{\cup}\ \sigma^*_\er$.
Because of the objective function we have:
\begin{align*}
\Opt(\sched(t,\lambda))	&= \Cmax(\sigma^*)\\
							&= \max_{s\in\eler} \Cmax(\sigma^*_s)\\
							&\geq \max_{s\in\eler} \Opt(\niasched(s(t),\lambda^*_s))\\
							&\min_{(\lambda_\el,\lambda_\er)\in\Xi(\lambda)}\max_{s\in\eler} \Opt(\niasched(s(t),\lambda_s))			
\end{align*}
Now let $(\lambda_\el,\lambda_\er)\in\Xi(\lambda)$ minimizing the right hand side of the equation and $\sigma_s\in\niasched(s(t),\lambda_s)$ optimal.
Then $\sigma:=\sigma_{\el}\cup\sigma_{\er}$ is in $\sched(t,\lambda)$ and $\Cmax(\sigma)$ equals the right hand side of the equation.
Since furthermore $\Cmax(\sigma)\geq\Cmax(\sigma^*)$ the claim follows.
\end{proof}

Now we consider $\Opt(\niasched(t,\lambda'))$.
We may split $\lambda'$ into the load due to inactive and that due to nearly inactive jobs.
Note that the nearly inactive jobs can only be processed by active machines (\ref{eq:tw_dg_niajobs}).
We set $\Upsilon(\lambda')$ to be the set of pairs $(\alpha,\xi)$ with $\lambda'=\alpha + \xi$, $\alpha\in\Lambda(\niajobs{t},\amachs{t})$ and $\xi\in\Lambda(\sched(t),\amachs{t})$.
Now (\ref{eq:tw_disjoint_union_jobs}), (\ref{eq:tw_disjoint_union_machs}) and (\ref{eq:tw_dg_niajobs}) yield:
\begin{lemma}\label{lem:tw_dg_reccurence2}
$\Opt(\niasched(t,\lambda'))=\min_{(\alpha,\xi)\in\Upsilon(\lambda')} \max\parens{\set{\Opt(\sched(t,\xi))} \cup  \sett{\lambda'_i}{i\in\amachs{t}}}$.
\end{lemma}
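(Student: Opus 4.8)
The plan is to prove the two inequalities separately, following the same pattern as the proofs of Lemma~\ref{lem:tw_dg_reccurence1} and Lemma~\ref{lem:tw_pg_reccurence2}. The key structural input is~(\ref{eq:tw_dg_niajobs}): every nearly inactive job is forced onto an active machine, so an optimal schedule in $\niasched(t,\lambda')$ decomposes cleanly along the partition $\iajobs{t}\ \dot{\cup}\ \niajobs{t}$ of its jobs (recall that $\iajobs{t}\cap\niajobs{t}=\emptyset$ and $\amachs{t}\cap\iamachs{t}=\emptyset$ directly from the definitions).

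For ``$\Opt(\niasched(t,\lambda'))\geq$ right-hand side'', I would take $\sigma^*\in\niasched(t,\lambda')$ optimal and set $\check{\sigma}^*=\sigma^*|_{\iajobs{t}}$ and $\tilde{\sigma}^*=\sigma^*|_{\niajobs{t}}$. By~(\ref{eq:tw_dg_niajobs}) the image of $\tilde{\sigma}^*$ lies in $\amachs{t}$, so $\tilde{\sigma}^*$ is feasible for $I[\niajobs{t},\amachs{t}]$, while $\check{\sigma}^*\in\sched(t)$. Letting $\alpha^*,\xi^*$ be the load vectors that $\tilde{\sigma}^*,\check{\sigma}^*$ induce on $\amachs{t}$, machine-wise additivity of loads over the disjoint job sets gives $\lambda'=\alpha^*+\xi^*$, hence $(\alpha^*,\xi^*)\in\Upsilon(\lambda')$. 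Since $\iamachs{t}$ carries only inactive jobs under $\sigma^*$, the loads of $\sigma^*$ and $\check{\sigma}^*$ agree on $\iamachs{t}$, while on $\amachs{t}$ the load of $\check{\sigma}^*$ is $\xi^*_i\le\lambda'_i$; thus $\Cmax(\check{\sigma}^*)\le\Cmax(\sigma^*)$ and so $\Opt(\sched(t,\xi^*))\le\Cmax(\sigma^*)$. Together with $\lambda'_i\le\Cmax(\sigma^*)$ for $i\in\amachs{t}$ this bounds the term of the minimum at $(\alpha^*,\xi^*)$, hence the whole right-hand side, by $\Cmax(\sigma^*)=\Opt(\niasched(t,\lambda'))$.

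For the reverse inequality I would take $(\alpha,\xi)\in\Upsilon(\lambda')$ attaining the minimum, a schedule $\check{\sigma}\in\sched(t,\xi)$ with $\Cmax(\check{\sigma})=\Opt(\sched(t,\xi))$, and a schedule $\tilde{\sigma}$ for $I[\niajobs{t},\amachs{t}]$ fulfilling $\alpha$ on $\amachs{t}$ (which exists since $\alpha\in\Lambda(\niajobs{t},\amachs{t})$, the instance $I[\niajobs{t},\amachs{t}]$ being feasible by~(\ref{eq:tw_dg_niajobs})). As $\iajobs{t}$ and $\niajobs{t}$ are disjoint, $\sigma:=\check{\sigma}\ \dot{\cup}\ \tilde{\sigma}$ is a well-defined feasible schedule for $I[\iajobs{t}\cup\niajobs{t},\iamachs{t}\cup\amachs{t}]$, i.e.\ $\sigma\in\niasched(t)$, and on each $i\in\amachs{t}$ it has load $\xi_i+\alpha_i=\lambda'_i$, so $\sigma\in\niasched(t,\lambda')$. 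Since $\sigma$ agrees with $\check{\sigma}$ on $\iamachs{t}$ and has load $\lambda'_i\ge\xi_i$ on $\amachs{t}$, one gets $\Cmax(\sigma)=\max\set{\Opt(\sched(t,\xi)),\max_{i\in\amachs{t}}\lambda'_i}$, which equals the chosen minimum value; hence $\Opt(\niasched(t,\lambda'))\le\Cmax(\sigma)$, and the two parts give the claim.

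I expect the main subtlety to be the makespan bookkeeping across the two machine groups: $\Opt(\sched(t,\xi))$ already incorporates the partial load $\xi$ on the active machines, so one has to use $\xi\le\lambda'$ componentwise (immediate from $\lambda'=\alpha+\xi$ with $\alpha\geq 0$) together with the fact that, by~(\ref{eq:tw_dg_niajobs}), adjoining the nearly inactive jobs leaves the loads on $\iamachs{t}$ untouched and merely raises the active-machine loads to the prescribed $\lambda'_i$. The degenerate cases ($\amachs{t}=\emptyset$, or no machines at all at $t$) are absorbed by the conventions for $\Opt$ on job-/machine-free instances fixed in Section~\ref{sec:prelim}.
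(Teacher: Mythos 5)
Your proof is correct and takes essentially the same route as the paper's: the same restriction of an optimal $\sigma^*\in\niasched(t,\lambda')$ to $\iajobs{t}$ and $\niajobs{t}$ using (\ref{eq:tw_dg_niajobs}), the same construction of $(\alpha^*,\xi^*)\in\Upsilon(\lambda')$ from the induced load vectors, and the same gluing of $\check{\sigma}$ and $\tilde{\sigma}$ for the reverse inequality. The only difference is that you make the makespan bookkeeping explicit ($\xi\leq\lambda'$ componentwise and the loads on $\iamachs{t}$ being unaffected by the nearly inactive jobs), which the paper leaves implicit in the identity $\Cmax(\sigma^*)=\max(\set{\Cmax(\check{\sigma}^*)}\cup\sett{\lambda'_i}{i\in\amachs{t}})$.
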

\begin{proof}
Let $\sigma^*\in\niasched(t,\lambda')$ be optimal.
Then (\ref{eq:tw_dg_niajobs}) implies $\sigma^*(\niajobs{t})\subseteq \amachs{t}$.
We set $\tilde{\sigma}^*=\sigma^*|_{\niajobs{t}}$ and $\check{\sigma}^*=\sigma^*|_{\iajobs{t}}$.
Furthermore let $\alpha^*$ be the load vector fulfilled by $\tilde{\sigma}^*$ and $\xi^*$ the one fulfilled by $\check{\sigma}^*$ on $\amachs{t}$. 
Then $\tilde{\sigma}^*$ is a feasible schedule for $I[\niajobs{t},\amachs{t}]$ fulfilling $\alpha^*$, $\check{\sigma}^*\in\sched(t,\xi^*)$ and $(\alpha^*,\xi^*)\in\Upsilon(\lambda')$.
Furthermore (\ref{eq:tw_disjoint_union_jobs}) yields $\sigma^*=\tilde{\sigma}^*\ \dot{\cup}\ \check{\sigma}^*$.
We get:
\begin{align*}
\Opt(\niasched(t,\lambda'))	&= \Cmax(\sigma^*)\\
								&= \max(\set{\Cmax(\check{\sigma}^*)}\cup\sett{\lambda'_i}{i\in\amachs{t}})\\
								&\geq \max(\set{\Opt(\sched(t,\xi^*))}\cup\sett{\lambda'_i}{i\in\amachs{t}})\\
								&\geq \min_{(\alpha,\xi)\in\Upsilon(\lambda')} \max(\set{\Opt(\sched(t,\xi))} \cup  \sett{\lambda'_i}{i\in\amachs{t}})
\end{align*}
Now let $(\alpha,\xi)\in\Upsilon(\lambda')$ minimizing the right hand side of the equation, $\check{\sigma}\in\sched(t,\xi)$ and $\tilde{\sigma}$ a feasible schedule for $I[\niajobs{t},\amachs{t}]$ fulfilling $\alpha$.
Then $\sigma:=\check{\sigma}\cup\tilde{\sigma}\in\niasched(t,\lambda')$ and therefore $\Cmax(\sigma)\geq\Cmax(\sigma^*)$. 
Since $\Cmax(\sigma)$ also equals the right hand side of the equation, the claim follows.
\end{proof}
The set $ \Lambda(\niajobs{t},\amachs{t})$ can be computed using the second dynamic program described in Section \ref{sec:prelim} in time $\numload^{\Oh(k)}\times \Oh(n)$ if $\numload$ is again a bound on the number of distinct loads that can occur on each machine.

\paragraph{The incidence graph.}

For the incidence graph we combine the ideas that we used for the two other graphs.
The situation is slightly more complicated because we have to handle the jobs and machines simultaneously.
All the job sets are defined like in the primal, and all the machine sets like in the dual graph case.
With these definitions the conditions (\ref{eq:tw_nia_ia_jobs})-(\ref{eq:tw_disjoint_union_machs}) follow almost directly from the definitions together with (T2) and (T3).
The proofs for the recurrence relations in this paragraph have the same structure as the proofs for the other recurrence relations and no new ideas are needed.
Therefore they are omitted.

Let $t\in V(t)$, $J\in\Gamma(\ajobs{t},\iamachs{t})$ and $J'\in\Gamma(\ajobs{t}\setminus\niajobs{t},\iamachs{t}\cup\niamachs{t})$.
We set $\sched(t,J)$ to be the set of feasible schedules $\sigma$ for $I[\iajobs{t}\cup J,\iamachs{t}\cup\amachs{t}]$ that schedule the jobs from $J$ on inactive machines, i.e., $\sigma(j)\in\iamachs{t}$ for each $j\in J$.
Moreover, $\niasched(t,J')$ is the set of schedules for $I[\iajobs{t}\cup\niajobs{t}\cup J',\iamachs{t}\cup\amachs{t}]$ that schedule the jobs from $J'$ on (nearly) inactive machines $\niamachs{t}\cup\iamachs{t}$.
The sets of schedules that in addition fulfil a load vector $\lambda\in\Lambda(\sched(t,J),\amachs{t})$ or $\lambda'\in\Lambda(\niasched(t,J')$ are denoted by $\sched(t,J,\lambda)$ and $\niasched(t,J',\lambda')$.
We consider $\Opt(\sched(t,J,\lambda))$ and $\Opt(\niasched(t,J',\lambda'))$.

First note $\Opt(I)=\Opt(S(a,\emptyset,\emptyset))$.
For a leaf note $t$ there are neither jobs nor machines and therefore $\Opt(\sched(t,\emptyset,\emptyset))=\Opt(\emptyset)) = 0$.
Hence let $t$ be a non-leaf node.
Like before, we first consider $\Opt(\sched(t,J,\lambda))$.
Both $J$ and $\lambda$ may be split into a left and a right part and we set $\Phi(J)$ like before.
Moreover, for $(J_{\el},J_{\er})\in\Phi(J)$ we define $\Xi(\lambda,(J_{\el},J_{\er}))$ to be the set of pairs $(\lambda_{\el},\lambda_{\er})$ with $\lambda_s\in\Lambda(\niasched(s(t),J_s),\amachs{s(t)})$ for $s\in\eler$.
Due to (\ref{eq:tw_nia_ia_jobs})-(\ref{eq:tw_disjoint_union_machs}) we have:
\begin{lemma}\label{lem:tw_ig_reccurence1}
$\Opt(\sched(t,J,\lambda)) = \min_{(J_{\el},J_{\er}),(\lambda_{\el},\lambda_{\er})}\max_{s\in\eler}\Opt(\niasched(s(t),J_s,\lambda_s))$.
\qed
\end{lemma}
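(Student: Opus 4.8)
The plan is to prove the identity by the two-direction argument already used for Lemma~\ref{lem:tw_pg_reccurence1} and Lemma~\ref{lem:tw_dg_reccurence1}; the only new feature at an internal node $t$ is that we must carry a job partition together with a pair of load vectors through the split at the same time. I will use the disjoint-union identities~(\ref{eq:tw_disjoint_union_jobs})--(\ref{eq:tw_disjoint_union_machs}), condition~(\ref{eq:tw_nia_ia_jobs}), and the incidence-graph analogues of the auxiliary identities~(\ref{eq:tw_pg_niamachs})--(\ref{eq:tw_pg_niajobs}) and~(\ref{eq:tw_dg_niajobs})--(\ref{eq:tw_dg_niamachs}), which as the text remarks follow from the definitions together with (T2) and (T3). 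Recall that $\Phi(J)$ is the set of pairs $(J_{\el},J_{\er})$ with $J=J_{\el}\ \dot{\cup}\ J_{\er}$ and $J_s\in\Gamma(\ajobs{s(t)}\setminus\niajobs{s(t)},\iamachs{s(t)}\cup\niamachs{s(t)})$, while $\Xi(\lambda,(J_{\el},J_{\er}))$ consists of the pairs $(\lambda_{\el},\lambda_{\er})$ with $\lambda_s\in\Lambda(\niasched(s(t),J_s),\amachs{s(t)})$ whose entries agree with $\lambda$ on $\amachs{t}$ after applying the transformation $\tau$.

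For the ``$\geq$'' inequality I would take an optimal $\sigma^{*}\in\sched(t,J,\lambda)$. Since $J\subseteq\ajobs{t}$ is disjoint from each $\niajobs{s(t)}$, I put $J^{*}_{s}=\sigma^{*-1}(\iamachs{s(t)}\cup\niamachs{s(t)})\cap J$; by~(\ref{eq:tw_disjoint_union_machs}) these sets partition $J$ and lie in the $\Gamma$-sets above, so $(J^{*}_{\el},J^{*}_{\er})\in\Phi(J)$. Then, using~(\ref{eq:tw_nia_ia_jobs}) and the incidence analogue of~(\ref{eq:tw_pg_niajobs}), the restriction $\sigma^{*}_{s}:=\sigma^{*}|_{\iajobs{s(t)}\cup\niajobs{s(t)}\cup J^{*}_{s}}$ is a feasible schedule for $I[\iajobs{s(t)}\cup\niajobs{s(t)}\cup J^{*}_{s},\iamachs{s(t)}\cup\amachs{s(t)}]$ that places the jobs of $J^{*}_{s}$ on $\iamachs{s(t)}\cup\niamachs{s(t)}$; letting $\lambda^{*}_{s}$ be the load vector it induces on $\amachs{s(t)}$ we obtain $\sigma^{*}_{s}\in\niasched(s(t),J^{*}_{s},\lambda^{*}_{s})$ and $(\lambda^{*}_{\el},\lambda^{*}_{\er})\in\Xi(\lambda,(J^{*}_{\el},J^{*}_{\er}))$. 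By~(\ref{eq:tw_disjoint_union_jobs}) and~(\ref{eq:tw_disjoint_union_machs}) the domains of $\sigma^{*}_{\el},\sigma^{*}_{\er}$ partition $\iajobs{t}\cup J$, so $\sigma^{*}=\sigma^{*}_{\el}\ \dot{\cup}\ \sigma^{*}_{\er}$ and therefore $\Cmax(\sigma^{*})=\max_{s\in\eler}\Cmax(\sigma^{*}_{s})\geq\max_{s\in\eler}\Opt(\niasched(s(t),J^{*}_{s},\lambda^{*}_{s}))$, which is at least the minimum on the right-hand side of the lemma.

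For the ``$\leq$'' inequality I pick a pair $(J_{\el},J_{\er}),(\lambda_{\el},\lambda_{\er})$ attaining that minimum and optimal $\sigma_{s}\in\niasched(s(t),J_{s},\lambda_{s})$. By~(\ref{eq:tw_disjoint_union_jobs}) the job domains of $\sigma_{\el},\sigma_{\er}$ are disjoint and cover $\iajobs{t}\cup J$, and by~(\ref{eq:tw_disjoint_union_machs}) their machine sets lie in $\iamachs{t}\cup\amachs{t}$, so $\sigma:=\sigma_{\el}\ \dot{\cup}\ \sigma_{\er}$ is a feasible schedule for $I[\iajobs{t}\cup J,\iamachs{t}\cup\amachs{t}]$; moreover every $j\in J_{s}$ is placed on $\iamachs{s(t)}\cup\niamachs{s(t)}\subseteq\iamachs{t}$ and $\sigma$ fulfils $\lambda$ on $\amachs{t}$ because $(\lambda_{\el},\lambda_{\er})\in\Xi(\lambda,(J_{\el},J_{\er}))$, so $\sigma\in\sched(t,J,\lambda)$. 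Hence $\Opt(\sched(t,J,\lambda))\leq\Cmax(\sigma)=\max_{s\in\eler}\Opt(\niasched(s(t),J_{s},\lambda_{s}))$, which equals the right-hand side; together with the first inequality this proves the lemma.

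The step I expect to be the main obstacle is checking, in the ``$\geq$'' direction, that $\sigma^{*}_{s}$ really is an admissible member of $\niasched(s(t),J^{*}_{s},\lambda^{*}_{s})$: one has to verify simultaneously that every job of $\iajobs{s(t)}\cup\niajobs{s(t)}$ still runs on a machine of $\iamachs{s(t)}\cup\amachs{s(t)}$, that the jobs of $J^{*}_{s}$ obey the ``on (nearly) inactive machines'' constraint, and that the induced load on the shared active machines $\amachs{t}$ is exactly the relevant part of $\lambda$. This is exactly where the incidence-graph versions of the equality-type identities enter, so before doing anything else I would write those out explicitly rather than leave them behind the paper's ``follows from (T2), (T3)'' remark.
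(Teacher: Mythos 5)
Your proof is correct and follows exactly the route the paper intends: the paper omits this proof precisely because it has the same two-direction structure as Lemmas \ref{lem:tw_pg_reccurence1} and \ref{lem:tw_dg_reccurence1}, and your argument is that structure carried out with the job partition and load-vector pair handled simultaneously, using (\ref{eq:tw_nia_ia_jobs})--(\ref{eq:tw_disjoint_union_machs}) in the same places. The verification you flag at the end (that the restricted schedule lands in $\niasched(s(t),J^{*}_{s},\lambda^{*}_{s})$) goes through as you expect, since (\ref{eq:tw_nia_ia_jobs}) applied at $s(t)$ keeps the (nearly) inactive jobs on $\amachs{s(t)}\cup\iamachs{s(t)}$ and the $J^{*}_{s}$ are placed on $\iamachs{s(t)}\cup\niamachs{s(t)}$ by construction.
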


Next we consider $\Opt(\niasched(t,J',\lambda'))$.
The set $J'$ again may be split into a part going to the inactive and a part going to the nearly inactive machines, while the nearly inactive jobs $\niajobs{t}$ have to be split into a part going to the inactive and a part going to the active machines (note that in this case (\ref{eq:tw_dg_niajobs}) does not hold).
Therefore, we set $\Psi(J')$ to be the set of pairs $(A,X)$ with $J' = A\dot{\cup} X$, $A\cap J'\in\Gamma(J',\niamachs{t})$, $A\cap\niajobs{t}\in\Gamma(\niajobs{t},\amachs{t})$ and $X\in\Gamma(\niajobs{t}\cup J',\iamachs{t})$.
The splitting of $\lambda'$ is more complicated as well, because in this case all of the active machines may receive load from the nearly inactive jobs, and the  nearly inactive machines may additionally receive load from the active but not nearly inactive jobs ((\ref{eq:tw_pg_niamachs}) does not hold).
Therefore we set $\Upsilon(\lambda',(A,X))$ to be the set of triplets $(\alpha,\beta,\xi)$ with $\alpha\in\Lambda(A\cap J',\niamachs{t})$, $\beta\in\Lambda(A\cap\niajobs{t},\amachs{t})$, $\xi\in\Lambda(\sched(t,X),\amachs{t})$ and $\lambda'=\tau_{\niamachs{t},\amachs{t}}(\alpha) + \beta + \xi$.
Due to (\ref{eq:tw_disjoint_union_jobs}) and (\ref{eq:tw_disjoint_union_machs}) we have:
\begin{lemma}\label{lem:tw_ig_reccurence2}
$\Opt(\niasched(t,J',\lambda')) = \min_{(A,X),(\alpha,\beta,\xi)}\max\parens{\set{\Opt(\sched(t,X,\xi))}\cup\sett{\lambda'(i)}{i\in\amachs{t}}}$.
\qed
\end{lemma}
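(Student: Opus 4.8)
The plan is to prove the two inequalities separately, following verbatim the structure of the proofs of Lemmas~\ref{lem:tw_pg_reccurence2} and~\ref{lem:tw_dg_reccurence2}, and relying only on the conditions~(\ref{eq:tw_nia_ia_jobs})--(\ref{eq:tw_disjoint_union_machs}) (which, as noted above, hold here) together with the inclusion $\niamachs{t}\subseteq\amachs{t}$ that is immediate from the definition of the nearly inactive machines. Throughout I read the defining relation of $\Psi(J')$ as $\niajobs{t}\cup J' = A\ \dot{\cup}\ X$, i.e.\ $A$ collects the jobs of $\niajobs{t}\cup J'$ that are \emph{not} sent to inactive machines, so that $A\cap J'$ (jobs of $J'$ not on $\iamachs{t}$) must sit on $\niamachs{t}$ and $A\cap\niajobs{t}$ (jobs of $\niajobs{t}$ not on $\iamachs{t}$) must sit on $\amachs{t}$.

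\textbf{The inequality ``$\geq$''.} Let $\sigma^*\in\niasched(t,J',\lambda')$ be optimal. I would partition $\niajobs{t}\cup J'$ by setting $X^* = (\sigma^*)^{-1}(\iamachs{t})\cap(\niajobs{t}\cup J')$ and $A^* = (\niajobs{t}\cup J')\setminus X^*$. Since the jobs of $J'$ can only be placed on $\niamachs{t}\cup\iamachs{t}$ and, by~(\ref{eq:tw_nia_ia_jobs}), the jobs of $\iajobs{t}\cup\niajobs{t}$ only on $\iamachs{t}\cup\amachs{t}$, one checks $A^*\cap J'\in\Gamma(J',\niamachs{t})$, $A^*\cap\niajobs{t}\in\Gamma(\niajobs{t},\amachs{t})$ and $X^*\in\Gamma(\niajobs{t}\cup J',\iamachs{t})$, hence $(A^*,X^*)\in\Psi(J')$. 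Put $\check{\sigma}^* = \sigma^*|_{\iajobs{t}\cup X^*}$, $\tilde{\sigma}_1 = \sigma^*|_{A^*\cap J'}$ and $\tilde{\sigma}_2 = \sigma^*|_{A^*\cap\niajobs{t}}$, and let $\alpha^*$, $\beta^*$, $\xi^*$ be the load vectors they fulfil on $\niamachs{t}$, on $\amachs{t}$ and on $\amachs{t}$ respectively. Then $\check{\sigma}^*\in\sched(t,X^*,\xi^*)$, while $\alpha^*$ and $\beta^*$ witness the memberships required by $\Upsilon$; and because every unit of load on a machine of $\amachs{t}$ is contributed by exactly one of $\check{\sigma}^*$ (load $\xi^*$), $\tilde{\sigma}_1$ (whose jobs sit on $\niamachs{t}\subseteq\amachs{t}$, contributing $\tau_{\niamachs{t},\amachs{t}}(\alpha^*)$) and $\tilde{\sigma}_2$ (load $\beta^*$), we get $\lambda' = \tau_{\niamachs{t},\amachs{t}}(\alpha^*) + \beta^* + \xi^*$, so $(\alpha^*,\beta^*,\xi^*)\in\Upsilon(\lambda',(A^*,X^*))$. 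For the makespan, note the machine set splits as $\iamachs{t}\ \dot{\cup}\ \amachs{t}$ and only $\check{\sigma}^*$ uses machines of $\iamachs{t}$, so $\Cmax(\sigma^*)$ is the maximum of $\max_{i\in\amachs{t}}\lambda'_i$ and the largest $\iamachs{t}$-load of $\check{\sigma}^*$; since also $\xi^*_i\le\lambda'_i$ on $\amachs{t}$ (all summands nonnegative), those two quantities dominate $\Cmax(\check{\sigma}^*)\ge\Opt(\sched(t,X^*,\xi^*))$, and hence $\Cmax(\sigma^*)\ge\max\parens{\set{\Opt(\sched(t,X^*,\xi^*))}\cup\sett{\lambda'_i}{i\in\amachs{t}}}$, which is at least the right-hand side minimum.

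\textbf{The inequality ``$\leq$''.} Conversely, fix $(A,X)\in\Psi(J')$ and $(\alpha,\beta,\xi)\in\Upsilon(\lambda',(A,X))$ attaining the right-hand side minimum, take $\check{\sigma}\in\sched(t,X,\xi)$ optimal, and take feasible schedules $\tilde{\sigma}_1$ for $I[A\cap J',\niamachs{t}]$ fulfilling $\alpha$ and $\tilde{\sigma}_2$ for $I[A\cap\niajobs{t},\amachs{t}]$ fulfilling $\beta$. By~(\ref{eq:tw_disjoint_union_jobs}), (\ref{eq:tw_disjoint_union_machs}) and $A\ \dot{\cup}\ X = \niajobs{t}\cup J'$ these three schedules have pairwise disjoint domains covering $\iajobs{t}\cup\niajobs{t}\cup J'$, so $\sigma := \check{\sigma}\ \dot{\cup}\ \tilde{\sigma}_1\ \dot{\cup}\ \tilde{\sigma}_2$ is a well-defined feasible schedule using only machines of $\iamachs{t}\cup\amachs{t}$; it sends every job of $J'$ onto $\niamachs{t}\cup\iamachs{t}$ (those in $X\cap J'$ onto $\iamachs{t}$ via $\check{\sigma}$, those in $A\cap J'$ onto $\niamachs{t}$ via $\tilde{\sigma}_1$), and its load on $\amachs{t}$ equals $\xi + \tau_{\niamachs{t},\amachs{t}}(\alpha) + \beta = \lambda'$, so $\sigma\in\niasched(t,J',\lambda')$. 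As in the other direction, $\Cmax(\sigma)$ is the maximum of $\max_{i\in\amachs{t}}\lambda'_i$ and the largest $\iamachs{t}$-load of $\check{\sigma}$; since $\Cmax(\check{\sigma}) = \Opt(\sched(t,X,\xi))$ and its $\amachs{t}$-loads are the $\xi_i\le\lambda'_i$, this maximum equals $\max\parens{\set{\Opt(\sched(t,X,\xi))}\cup\sett{\lambda'_i}{i\in\amachs{t}}}$, i.e.\ the right-hand side minimum. Hence $\Opt(\niasched(t,J',\lambda'))\le\Cmax(\sigma)$ equals that minimum, and combining with ``$\geq$'' gives the claim.

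\textbf{Expected main obstacle.} The only genuinely new point compared with the primal and dual cases is the accounting of the load on the active machines: here neither~(\ref{eq:tw_dg_niajobs}) nor~(\ref{eq:tw_pg_niamachs}) is available, so a single machine of $\amachs{t}$ may simultaneously carry load from $\iajobs{t}$-jobs (through the recursive schedule $\check{\sigma}$), from $\niajobs{t}$-jobs placed on $\amachs{t}$, and from $J'$-jobs placed on $\niamachs{t}\subseteq\amachs{t}$. The delicate part is therefore to verify the three-term decomposition $\lambda' = \tau_{\niamachs{t},\amachs{t}}(\alpha) + \beta + \xi$ together with the coordinatewise bound $\xi\le\lambda'$ on $\amachs{t}$ --- the latter being exactly what lets the term $\Opt(\sched(t,X,\xi))$ be placed under the same maximum as the values $\lambda'_i$. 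Everything else is the routine gluing-and-splitting bookkeeping already carried out in Lemmas~\ref{lem:tw_pg_reccurence2} and~\ref{lem:tw_dg_reccurence2}.
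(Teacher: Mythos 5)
Your proof is correct and is exactly the argument the paper intends: the paper omits the proofs of the incidence-graph recurrences, stating only that they "have the same structure as the proofs for the other recurrence relations," and your write-up carries out precisely that structure (splitting an optimal schedule along $(A,X)$ and $(\alpha,\beta,\xi)$, gluing in the converse direction), including the sensible reading of the definition of $\Psi(J')$ as $\niajobs{t}\cup J'=A\ \dot{\cup}\ X$ and the three-term load decomposition $\lambda'=\tau_{\niamachs{t},\amachs{t}}(\alpha)+\beta+\xi$ on $\amachs{t}$. No discrepancy with the paper's approach.
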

Note that the sets $\Lambda(A\cap J',\niamachs{t})$ and $\Lambda(A\cap\niajobs{t},\amachs{t})$ can be computed in time $\numload^{\Oh(k)}$ using the second dynamic program described in Section \ref{sec:prelim}, if $\numload$ is again a bound on the number of distinct loads that can occur on each machine.

\paragraph{Results.}

Using above arguments, we can design dynamic programs with running time $2^{\Oh(k)}\times\Oh(nm)$ in the primal case and $\numload^{\Oh(k)}\times\Oh(nm)$ in the dual and incidence graph cases.
Optimal schedules can be found via backtracking proving the Theorems \ref{thm:tw_primal_result_FPT} and \ref{thm:tw_dual_inst_result_FPT}.
Theorem \ref{thm:tw_dual_inst_result_APPROX} follows by the combination of the dynamic programs and a rounding scheme similar to that in Section \ref{sec:prelim}.

\section{Rankwidth Results}\label{sec:rw}

First we want to argue that there is not much to be gained when considering primal or dual graphs with bounded rankwidth.
For this consider any instance $I$ of $\RA$.
By adding a job with processing time $\Opt(I)$ that can be processed on every machine, and a machine that can only process this new job, we get a modified instance $I'$.
Any schedule for one of the instances can trivially be transformed into a schedule for the other without an increase in the makespan.
However, while the rankwidth of the primal or dual graph of $I$ could have been arbitrarily high, the rankwidth of the primal and dual graph of $I'$ are both equal to one, because these graphs are complete.

We study the case when the rankwidth of the incidence graph is bounded by a constant $k$.
Moreover we assume that also the number $d$ of distinct job sizes is bounded by a constant, which we can do because of the following result.
Let $\mathcal{I}$ be some class of instances of $\RA$, which is invariant with respect to changing the processing times of jobs and the introduction of copies of jobs.
\begin{lemma}[Rounding Lemma]\label{lem:rw_rounding}
If there is a PTAS for instances from $\mathcal{I}$, for which the number of distinct processing times is bounded by a constant, then there is also a PTAS for any instance from $\mathcal{I}$.  
\end{lemma}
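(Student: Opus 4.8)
The plan is to use a standard geometric rounding of the processing times, so that the rounded instance has a constant number of distinct job sizes, apply the assumed PTAS to the rounded instance, and then translate the resulting schedule back. Let $\eps>0$ be given and let $I\in\mathcal{I}$ be an arbitrary instance of $\RA$ with job sizes $p_j$. First I would compute a $2$-approximation $B$ of $\Opt(I)$ using the algorithm of Lenstra et al.\ \cite{LST90}, so $\Opt(I)\le B\le 2\Opt(I)$. Set $\delta=\eps/3$ (or some similarly small constant). I would discard, or rather ignore for the moment, all jobs with $p_j>B$ — since $B\ge\Opt(I)$, no such job exists in a feasible instance once we note $\Opt(I)\ge p_j$ for every job. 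So every $p_j$ lies in $(0,B]$. I would then round each $p_j$ down to the nearest value of the form $\delta B(1+\delta)^{\ell}/n$ for integer $\ell\ge 0$ (with the convention that values below $\delta B/n$ are rounded down to, say, a single tiny common value, or simply kept and absorbed into the error — there are at most $n$ of those contributing total at most $\delta B$). This yields an instance $I'$ with the same machines and assignment restrictions (hence $I'\in\mathcal I$, using invariance under changing processing times), and the number of distinct sizes in $I'$ is $\Oh(\log_{1+\delta}(n/\delta))=\Oh(\poly(n))$ — wait, that is not constant.

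To actually get a \emph{constant} number of distinct sizes I would instead split the jobs into ``big'' and ``small'': a job is big if $p_j>\delta B$ and small otherwise. For big jobs, rounding down to the nearest $\delta B(1+\delta)^\ell$ with $\delta B<\delta B(1+\delta)^\ell\le B$ leaves only $\Oh(\log_{1+\delta}(1/\delta))=\Oh_\eps(1)$ distinct sizes, a constant depending only on $\eps$. For small jobs, I would replace each machine's collection of assigned small jobs by an appropriate number of ``unit'' copies: more precisely, since all small jobs are interchangeable in the sense that total processed volume is what matters, I would round every small job size down to a common value $\delta B/n'$ for a suitable $n'$, or better, merge small jobs per machine into at most $\Oh(1)$ pieces — here I would use the invariance of $\mathcal I$ under introducing copies of jobs to replace, for each job $j$ with $p_j\le\delta B$, the job by copies of a single canonical tiny job of size $\delta B/n$ (with the same $M(j)$), so that all small jobs now have exactly one common size. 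After these two operations the resulting instance $I'$ has at most $\Oh_\eps(1)+1$ distinct processing times, and $I'\in\mathcal I$ by the two invariance assumptions. It is routine that $\Opt(I')\le\Opt(I)$ (rounding down never increases the makespan, and copies only give more freedom) and, conversely, any schedule for $I'$ of makespan $C$ induces a schedule for $I$ of makespan at most $C+\delta B\cdot(\text{number of big sizes per machine at most }1/\delta)+\delta B\le C+\Oh(\eps)\Opt(I)$ — the reconstruction bundles the tiny copies assigned to a machine back into the original small jobs and scales big-job sizes back up by at most a $(1+\delta)$ factor.

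Putting it together: run the assumed PTAS on $I'$ with accuracy parameter $\delta$, obtaining a schedule of makespan at most $(1+\delta)\Opt(I')\le(1+\delta)\Opt(I)$; translate it back to a schedule for $I$ incurring an additive loss of $\Oh(\delta)\Opt(I)$ as above; for $\delta$ chosen as a small constant multiple of $\eps$ this gives total makespan at most $(1+\eps)\Opt(I)$, and the running time is polynomial in $|I|$ since the PTAS is polynomial in $|I'|\le\poly(|I|)$ and all rounding steps are polynomial. Hence there is a PTAS for all of $\mathcal I$.

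**Main obstacle.** The delicate point is the treatment of small jobs: simply rounding all sizes geometrically keeps $\Omega(\log n)$ distinct values, so one genuinely needs the closure of $\mathcal I$ under introducing job copies to collapse the small jobs into a single size class while preserving membership in $\mathcal I$, and one must check that the back-translation of a schedule on tiny copies into a schedule on the original small jobs loses only an additive $\Oh(\eps)B$ per machine (each machine receives small-job volume at most $C\le 2B$, so it can be repacked into original small jobs with an overflow of at most one small job, i.e.\ at most $\delta B$). Verifying that these two conversions stay within $\mathcal I$ and bounding the accumulated additive error cleanly is where the real care is needed; the rest is the standard dual-approximation-style rounding argument.
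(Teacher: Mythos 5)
Your overall strategy is the same as the paper's: compute a $2$-approximate bound $B$, split the jobs at the threshold $\delta B$, round the big jobs so that only constantly many distinct sizes remain, use the closure of $\mathcal{I}$ under changing processing times and under introducing copies to replace every small job $j$ by copies of a single common tiny size $\delta B/n$ with restriction set $\validmachs{j}$, run the assumed PTAS on the resulting instance $I'$, and translate the schedule back. The cosmetic differences (you round big sizes down along a geometric grid, the paper rounds them up to multiples of $\delta^2 B$) do not matter.

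There is, however, one genuine gap, and it sits exactly at the point you flag as the main obstacle: the back-translation of the small jobs. You justify the claim that the tiny copies assigned to the machines ``can be repacked into original small jobs with an overflow of at most one small job'' by the observation that each machine receives small-job volume at most $2B$. That volume bound does not imply the claim. The copies of a single small job $j$ may be scattered over many machines, so the schedule for $I'$ only induces a \emph{fractional} assignment $x_{ij}=|\{j'\in S'(j):\sigma'(j')=i\}|/|S'(j)|$ of the original small jobs; a machine can fractionally host pieces of $\Omega(n)$ distinct small jobs, and a careless integral reassignment can then exceed its tiny-copy load $t_i$ by much more than $\delta B$. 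What is needed --- and what the paper supplies --- is the LP-rounding theorem of Lenstra, Shmoys and Tardos \cite{LST90}: view $(x_{ij})$ as a feasible solution of the assignment LP with machine capacities $t_i$, pass to a basic feasible solution (whose support is a pseudo-forest), and round it to an integral assignment that violates each capacity by at most $\max_{j\in S}p_j\leq\delta B$. With that ingredient inserted your argument closes and matches the paper's proof; without it, the claimed additive loss of $\delta B$ per machine in the back-translation is not established.
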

\begin{proof}
Let $I\in\mathcal{I}$, $\eps>0$ and $B$ an upper bound of $\Opt(I)$ with $B\leq 2\Opt$.
Such a bound $B$ can be found in polynomial time for example with the $2$-approximation by Lenstra et al. \cite{LST90}.
Moreover let $\delta:=\min\set{1/3,\eps/7}$.
We call jobs $j$ \emph{big}, if $p_j>\delta B$ and otherwise \emph{small}.
Next, we construct a modified instance $I'$.
This instance has the same machine set and for each big job $j$ a job with the same restrictions and processing time $p'_j:=\delta^2 B \lceil\frac{p_j}{\delta^2 B}\rceil$ is included in the job set.
This yields $p'_j\leq p_j+\delta^2 B\leq (1+\delta)p_j$.
For each small job $j$ in $I$ we introduce $\lceil\frac{np_j}{\delta B}\rceil\in\Oh(n)$ many jobs with the same restrictions as $j$, and with processing time $\frac{\delta B}{n}$.

Note that $I'$ has a has at most $1/\delta +1$ many distinct processing times and that $I'\in\mathcal{I}$.
Moreover the size of $I'$ is polynomial in the size of $I$.

Given an optimal solution of $I$, consider the solution of $I'$ we get by scheduling both the big and the small jobs in $I'$ the same way as there analogues in $I$.
The big jobs on a machine can cause an increase of the processing time of at most factor $(1+\delta)$, while for each small job of $I$ there may be an increase of at most $\frac{\delta B}{n}$.
Therefore we get: 
\begin{equation*}
\Opt(I')\leq \Opt(I)+\delta\Opt(I)+\delta B\leq (1+3\delta)\Opt(I)
\end{equation*}

Now given a PTAS for instances of $\mathcal{I}$ for which the number of distinct processing times is bounded by a constant, we can compute a schedule $\sigma'$ for $I'$ with $\Cmax(\sigma)\leq (1+\delta)\Opt(I')$ in polynomial time.
We use $\sigma'$ to construct a schedule for $\sigma$ for $I$.
In this schedule the big jobs are assigned in the same way as there analogous in $I'$.
For the small jobs we need some additional consideration.
Let $S$ and $S'$ be the set of small jobs in $I$ and $I'$ respectively.
Moreover, for $j\in S$ let $S'(j)$ be the set of small jobs that were inserted in $I'$ due to $j$.
The assignment of $S(j)$ in $\sigma'$ can be seen as a \emph{fractional} assignment of $j$.
We find a rounding for this fractional assignment of the small jobs.
For each machine $i$ and small job $j\in S$ let $x_{ij}=|\sett{j'\in S(j)}{\sigma'(j)=i}|/|S(j)|$.
Furthermore let $t_i$ be the summed up processing time that machine $i$ receives in the schedule $\sigma'$ from small jobs, i.e., $t_i=|\sett{j'\in S'}{\sigma'(j)=i}|\frac{\delta B}{n}$.
Then $(x_{ij})$ is a solution of the following linear program:
\begin{align}
\sum_{i\validmachs{j}}x_{ij} 	& = 1 		& \forall j\in S \label{eq:rw_rounding_lp1} \\
\sum_{j\in S}p_jx_{ij} 		& \leq t_i 	& \forall i\in\machs \textbf{\label{eq:rw_rounding_lp2}}\\
x_{ij}							& \geq 1	& \forall j\in S,i\in\machs \nonumber
\end{align}
Using the rounding approach by Lenstra et al. \cite{LST90}, we can transform this in polynomial time into an integral solution fulfilling the constraint (\ref{eq:rw_rounding_lp1}) and instead of (\ref{eq:rw_rounding_lp2}) the modified constraint:
\begin{align*}
\sum_{j\in S}p_jx_{ij} 		& \leq t_i+\max_{j\in S} p_j 	& \forall i\in\machs
\end{align*}
We set $\sigma$ to assign the small jobs according to $(x_{ij})$.
Since $\max_{j\in S} p_j\leq \delta B$ we get $\Cmax(\sigma) 	\leq \Cmax(\sigma')+\delta B$ and together with the above considerations:
\begin{equation*}
\Cmax(\sigma) \leq ((1+\delta)(1+3\delta)+2\delta)\Opt(I) \leq (1+\eps)\Opt(I)
\end{equation*}
\end{proof}
%The idea for the proof is to first find an upper bound $B$ of $\Opt(I)$ with $B\leq 2\Opt$ (using e.g. \cite{LST90}) and to choose some $\delta$ depending on the accuracy $\eps>0$.
%The big jobs with $p_j>\delta B$ are rounded to the next integer multiple of $\delta^2 B$ and the rest of the jobs are split into tiny jobs of equal size $\delta B/n$ yielding an instance in $\mathcal{I}$.
%When the instance is solved the big jobs can be replaced with the original jobs, while the small jobs are only fractionally assigned.
%An integral assignment can be found with an error of $\delta B$ using \cite{LST90}.
%These steps can be performed in a way such that the obtained solution fulfils the rate of $(1+\eps)$.
While all of the used techniques are well known they have---to the best of our knowledge---not been used in the indicated way up to now.

It can be easily seen that the class of instances of $\RA$, for which the rankwidth of the incidence graph is bounded by a constant $k$, is invariant with respect to changing the processing time of jobs and the introduction of copies of jobs.

\subsection*{Dynamic Program}

We present a dynamic program to solve $\RA$ using a branch decomposition $(T,\eta)$ with rankwidth $k$ for the incidence graph.
First we give some intuition on why a bounded rankwidth is useful.

For any Graph $(V,E)$ and $X\subseteq V$, we say that $u,v\in V$ have the same \emph{connection type with respect to $X$} if $N(u)\cap X= N(v)\cap X$. 
If $X$ is clear from the context we say that $u$ and $v$ have the same connection type.
Now, let $e=\set{a,b}\in E(T)$ be some edge of the branch decomposition and $\set{X_{e,a},X_{e,b}}$ the respective cut of $T$, i.e., $X_{e,x}$ for $x\in\set{a,b}$ is the set of vertices of $T$ that are in the same connected component as $x$ when the edge $e$ is removed.
Then $\set{X_{e,a},X_{e,b}}$ induces a partition of both the jobs and machines by $\cutjobs{e,x}:=\sett{j\in\jobs}{\eta(j)\in X_{e,x}}$ and $\cutmachs{e,x}:=\sett{i\in\machs}{\eta(j)\in X_{e,x}}$ for $x\in\set{a,b}$.
\begin{remark}\label{rem:rw_types_bounded_by_width}
Let $x,y\in\set{a,b}$ with $x\neq y$. The number of distinct connection types of $\cutjobs{e,x}$ with respect to $\cutmachs{e,y}$ is bounded by $2^k$.
\end{remark}
We actually use that the number of distinct connection types of the jobs is bounded.

In the rest of this section we first show how the branch decomposition can be used in a straightforward way to solve $\RA$ (with exponential running time).
The basic idea for this is that each edge of the decomposition corresponds to a partition of the job and machine sets and an optimal solution may be found by trying all possible ways of moving jobs between partitions.
At the machine-leafs all arriving jobs have to be processed, with no jobs going out, and at the job-leafs all jobs have to be send away, with no jobs coming in.
From this the procedure can work up to some root edge.
Next we argue that it is sufficient to consider only certain locally defined classes of job sets.
The crucial part here is that the number of these classes can be polynomially bounded, because the number of distinct sizes and connection types of jobs are constant.

\paragraph{Job sets.}

\begin{wrapfigure}{r}{1.7cm} 
\centering
\includegraphics{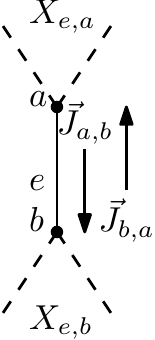}
\caption{Edge $e$}
\label{fig:rw_up_down}
\end{wrapfigure}
Let $e=\set{a,b}\in E(T)$ again be some edge of the tree $T$ and $\set{X_{e,a},X_{e,b}}$ the corresponding cut of $T$.
We fix a schedule $\sigma$ and make some basic observations.
There is a set of jobs $\sendjobs{a,b}\subseteq\cutjobs{e,a}$ that $\sigma$ assigns to machines from $\cutmachs{e,b}$.
We will use the intuition that $\sendjobs{a,b}$ is sent through $e$ from $a$ to $b$ (see also Figure \ref{fig:rw_up_down}).
The node $b$ may be an inner node or a leaf.
Moreover, if $b$ is a leaf, $\eta^{-1}(t)$ may be a job $j^*$ or a machine $i^*$.
In the first case $\sigma$ sends no jobs to $t$, and $j^*$ to $a$.
In the second case no jobs are sent to $a$ and the jobs send to $b$ should be feasible on $i^*$.
Now any set that is sent through an edge and arrives at an internal node, will be split into two parts: one going forth through the second and one through the third edge.
And looking at it the other way around:
Any set that is sent by a schedule through an edge coming from an inner node, is put together from two parts, one coming from the second and one coming from the third edge.

\begin{wrapfigure}{r}{2.3cm} 
\centering
\includegraphics{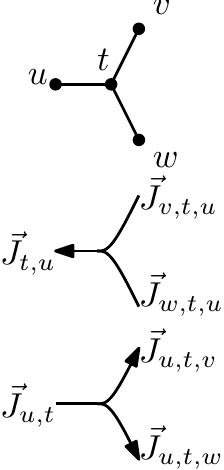}
\caption{Job-sets are split.}
\label{fig:rw_split}
\end{wrapfigure}
We formalize this notion.
Let $t$ be an internal node of $T$, with neighbours $u,v,w\in V(T)$.
Then for each pair of neighbours $x,y$ of $t$ there are job sets $\splitjobs{x,t,y}\subseteq \sendjobs{x,t}\cap \sendjobs{t,y}$, such that:
\begin{align}\label{eq:rw_job_splitting}
\sendjobs{u,t} &= \splitjobs{u,t,v} \dot{\cup} \splitjobs{u,t,w}  &   \sendjobs{t,u} &= \splitjobs{v,t,u} \dot{\cup} \splitjobs{w,t,u}
\end{align}
See also Figure \ref{fig:rw_split}.
It is rather easy to see that sets $\sendjobs{s,t}$ that are feasible at the leafs and fulfil the conditions (\ref{eq:rw_job_splitting}) uniquely define a feasible schedule. 

Using these observations, we now discuss how (the value of) an optimal schedule can be found by considering different job sets that may be sent through the edges.
For this we use an intuition of up and down, with $a$ above and $b$ below.
Let $\check{J}=\vec{J}_{a,b}\subseteq\cutjobs{e,a}$ and $\hat{J}=\vec{J}_{b,a}\subseteq\cutjobs{e,b}$ be some candidate sets to be sent up and down respectively through $e$.
We set $I_{e,x}(\hat{J},\check{J})=I[(\cutjobs{e,x}\setminus\vec{J}_{x,y})\cup\vec{J}_{y,x},\cutmachs{e,x}]$ for $x,y\in\set{a,b}$ with $x\neq y$, i.e., the subinstances of $I$ induced by $e$, if $\hat{J}$ and $\check{J}$ are send up or down respectively.
Note that the instance $I$ is split into the two subinstances.  
Moreover let $\Gamma_e$ be the set of pairs $(\hat{J},\check{J})$.
Then:
\begin{align}
\Opt(I) &= \min_{(\hat{J},\check{J})\in\Gamma_e}\max\set{\Opt(I_{e,a}(\hat{J},\check{J})),\Opt(I_{e,b}(\hat{J},\check{J}))} \label{eq:rw_opt_set_1}
\end{align}

We now consider the computation of $\Opt(I_{e,b}(\hat{J},\check{J}))$ for the two cases when $b$ is an internal node or a leaf.
If $b$ is a leaf, it may correspond to a job or a machine, i.e., $\eta^{-1}(b)=j^*$ or $\eta^{-1}(b)=i^*$.
In the first case we get $\Opt(I_{e,b}(\hat{J},\check{J}))=\infty$ if $\hat{J}\neq \set{j^*}=\cutjobs{e,b}$ or $\check{J}\neq \emptyset$, and $\Opt(I_{e,b}(\hat{J},\check{J}))=0$ otherwise.
In the second case $\hat{J}$ is empty since there are no jobs at $b$.
We get $\Opt(I_{e,b}(\hat{J},\check{J}))=\sum_{j\in\check{J}}p_j$ if $\check{J}\subseteq\validmachs{i^*}$ and $\Opt(I_{e,b}(\hat{J},\check{J}))=\infty$ otherwise.

\begin{wrapfigure}{r}{2.5cm} 
\centering
\includegraphics{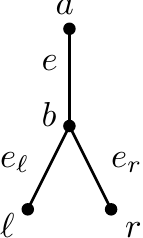}
\caption{Inner node $b$}
\label{fig:rw_left_right}
\end{wrapfigure}
Now let $b$ be an internal node that is connected to two lower nodes $\el$ and $\er$ via edges $e_{\el}$ and $e_{\er}$ (Figure \ref{fig:rw_left_right}).
We say that $\el$ and $e_{\el}$ are on the left, while $\er$ and $e_{\er}$ are on the right.
Recursively we assume that for any $(\hat{L},\check{L})\in\Gamma_{e_{\el}}$ and $(\hat{R},\check{R})\in\Gamma_{e_{\er}}$ we know $\Opt(I_{e_{\el},\el}(\hat{L},\check{L}))$ and $\Opt(I_{e_{\er},\er}(\hat{R},\check{R}))$ respectively.
We want to identify the set $\Lambda_e(\hat{J},\check{J})$ of tuples $(\hat{L},\check{L},\hat{R},\check{R})$ that for fixed $(\hat{J},\check{J})$ may occur in a schedule, i.e., fulfil condition (\ref{eq:rw_job_splitting}) for all edges from $\set{e,e_{\el},e_{\er}}$ .
For $\hat{J}$ it is clear which part is coming from the left and which from the right and we set $\hat{J}_{\el}\subseteq\cutjobs{e_{\el},\el}$ and $\hat{J}_{\er}\subseteq\cutjobs{e_{\er},\er}$ accordingly, such that $\hat{J}=\hat{J}_{\el}\dot{\cup}\hat{J}_{\er}$.
The other four sets in which the job sets going up and down could be split can all be tried.
More precisely, for each $\check{J}_{\el},\check{J}_{\er}\subseteq\check{J}$ with $\check{J}=\check{J}_{\el}\dot{\cup}\check{J}_{\er}$, $\hat{L}_{\er}\subseteq(\cutjobs{e_{\el},\el}\cap\jobs)\setminus\hat{J}_{\el}$ and $\hat{R}_{\el}\subseteq(\cutjobs{e_{\er},\er}\cap\jobs)\setminus\hat{J}_{\er}$ the tuple $(\hat{L}_{\er}\dot{\cup}\hat{J}_{\el},\check{J}_{\el}\dot{\cup}\hat{R}_{\el},\hat{R}_{\el}\dot{\cup}\hat{J}_{\er},\check{J}_{\er}\dot{\cup}\hat{L}_{\er})$ is in $\Lambda_e(\hat{J},\check{J})$ and the set is defined by such tuples.
We get:
\begin{align}
\Opt(I_{e,b}(\hat{J},\check{J})) &= \min_{(\hat{L},\check{L},\hat{R},\check{R})}\max\set{\Opt(I_{e_{\el},\el}(\hat{L},\check{L})),\Opt(I_{e_{\er},\er}(\hat{R},\check{R}))}\label{eq:rw_opt_set_2}
\end{align}
Using these considerations $\RA$ can be solved by choosing a root edge $e^*=\set{a^*,b^*}$ that is incident to a leaf $a^*$ corresponding to a job and designing a dynamic program working from leaf edges to the root edge using (\ref{eq:rw_opt_set_2}).
Now (\ref{eq:rw_opt_set_1}) for $e=e^*$ together with the considerations for leaf nodes yield $\Opt(I)=\Opt(I_{e^*,b^*}(\emptyset,\eta^{-1}(a^*)))$.
The running time of such an algorithm is exponential in the input length.

\paragraph{Classes of Jobs.}

Let $\check{J},\check{J}'\subseteq\cutjobs{e,a}$.
There are some cases in which $\check{J}$ and $\check{J}'$ are in some sense similar and it holds that $\Opt(I_{e,b}(\hat{J},\check{J})) = \Opt(I_{e,b}(\hat{J},\check{J}'))$.
This is the case if there is a bijection $\alpha:\check{J}\rightarrow\check{J}'$ such that $j$ and $\alpha(j)$ have the same connection type with respect to $\cutmachs{e,b}$ and $p_j=p_{\alpha(j)}$ for each $j\in\hat{J}$.
By this, an equivalence relation $\sim_{e,a}$ can be defined, and analogously a relation $\sim_{e,b}$.
Now the observation (\ref{eq:rw_opt_set_1}) can be reformulated in terms of equivalence classes:
\begin{lemma}\label{lem:rw_recurrence_1}
$\Opt(I) = \min_{([\hat{J}],[\check{J}])}\max\set{\min_{\hat{J}'}\Opt(I_{e,a}(\hat{J}',\check{J})), \min_{\check{J}'}\Opt(I_{e,b}(\hat{J},\check{J}'))}$.\qed
\end{lemma}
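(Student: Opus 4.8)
The plan is to derive Lemma~\ref{lem:rw_recurrence_1} from the identity~(\ref{eq:rw_opt_set_1}) together with the invariance observation noted just above the statement and its mirror image: for every fixed $\hat{J}\subseteq\cutjobs{e,b}$ the value $\Opt(I_{e,b}(\hat{J},\check{J}))$ depends on $\check{J}$ only through its $\sim_{e,a}$-class, and symmetrically, for every fixed $\check{J}\subseteq\cutjobs{e,a}$ the value $\Opt(I_{e,a}(\hat{J},\check{J}))$ depends on $\hat{J}$ only through its $\sim_{e,b}$-class. I would recall the reason in one direction only: if $\alpha\colon\check{J}\to\check{J}'$ is a bijection witnessing $\check{J}\sim_{e,a}\check{J}'$, then, extended by the identity on $\cutjobs{e,b}\setminus\hat{J}$, it is a bijection between the job sets of the two instances $I_{e,b}(\hat{J},\check{J})$ and $I_{e,b}(\hat{J},\check{J}')$ --- both of which have machine set $\cutmachs{e,b}$ --- that preserves processing times and, inside $\cutmachs{e,b}$, the set of admissible machines of each job; hence $\sigma\mapsto\sigma\circ\alpha^{-1}$ is a makespan-preserving bijection between their feasible schedules, so their optima agree (with the convention $\Opt=\infty$ when infeasible). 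The statement for $\sim_{e,b}$ and $I_{e,a}$ is obtained by exchanging the two sides of $e$.

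With this in hand I would carry out the regrouping. Since $\Gamma_e$ is exactly the product $2^{\cutjobs{e,b}}\times 2^{\cutjobs{e,a}}$, the minimization in~(\ref{eq:rw_opt_set_1}) factors as $\min_{([\hat{J}],[\check{J}])}\,\min_{\hat{J}'\in[\hat{J}],\,\check{J}'\in[\check{J}]}\max\set{\Opt(I_{e,a}(\hat{J}',\check{J}')),\,\Opt(I_{e,b}(\hat{J}',\check{J}'))}$, the inner minimum ranging over representatives of the two classes independently. Fixing the class pair, the invariance property says that the first term in the $\max$ does not depend on $\hat{J}'$ and the second does not depend on $\check{J}'$, so the inner minimum has the form $\min_{\hat{J}'\in[\hat{J}],\,\check{J}'\in[\check{J}]}\max\set{a(\check{J}'),\,b(\hat{J}')}$; at this point I would use the elementary identity $\min_{x\in X,\,y\in Y}\max\set{f(x),g(y)}=\max\set{\min_{x\in X}f(x),\,\min_{y\in Y}g(y)}$ for nonempty $X,Y$ (``$\geq$'' is immediate termwise; ``$\leq$'' follows by evaluating at a minimizer of $f$ and a minimizer of $g$), valid also in the presence of $\infty$. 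This rewrites the inner minimum as $\max\set{\min_{\check{J}'\in[\check{J}]}\Opt(I_{e,a}(\hat{J},\check{J}')),\,\min_{\hat{J}'\in[\hat{J}]}\Opt(I_{e,b}(\hat{J}',\check{J}))}$ for any fixed representatives $\hat{J},\check{J}$, and restoring the outer $\min$ over class pairs yields the asserted identity; by the invariance property each of the two inner minima is unchanged under replacing the representative held fixed in its other coordinate, which is exactly the freedom implicit in the notation of the statement.

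I do not expect a deep obstacle: the two moving parts are the relabeling argument for the invariance (essentially stated already in the text) and the min--max interchange, both routine. The point that most needs care is the bookkeeping behind the interchange --- confirming that after the two equivalence classes are fixed, the first term inside the $\max$ has genuinely lost all dependence on $\hat{J}'$ and the second all dependence on $\check{J}'$, so that the two residual minimizations are over independent variables coupled by no constraint from $\Gamma_e$. This decoupling, rather than any computation, is the real content of the lemma, and it is precisely what allows the dynamic program that follows to keep one table entry per tree edge and per pair of job classes, instead of one per pair of job sets.
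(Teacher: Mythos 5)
Your argument is correct and is exactly the proof the paper leaves implicit: Lemma~\ref{lem:rw_recurrence_1} is stated without proof as an immediate consequence of (\ref{eq:rw_opt_set_1}) and the representative-invariance of $\Opt(I_{e,b}(\hat{J},\cdot))$ under $\sim_{e,a}$ and of $\Opt(I_{e,a}(\cdot,\check{J}))$ under $\sim_{e,b}$, and your write-up supplies precisely the two ingredients that make this work (the relabeling bijection extended by the identity, and the $\min$--$\max$ decoupling over a product domain). One caveat: the formula you actually derive, $\max\set{\min_{\check{J}'}\Opt(I_{e,a}(\hat{J},\check{J}')),\,\min_{\hat{J}'}\Opt(I_{e,b}(\hat{J}',\check{J}))}$, places each residual minimization over the set \emph{sent away} from the side in question, whereas the printed lemma minimizes over the set \emph{received}; the printed placement would make both inner minima vacuous by the very invariance being invoked, and your placement is the one consistent with the later definition of $\Opt(e,\hat{\iota},\check{\iota})$ and with the proof of Lemma~\ref{lem:rw_splitting}, so the printed statement should be read as carrying a transposition typo rather than your derivation as missing its target.
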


Note that in this equation equivalence classes $[\check{J}]$ and $[\hat{J}]$ are considered belonging to the relations $\sim_{e,a}$ and $\sim_{e,b}$ respectively.
$\hat{J}$ and $\check{J}$ are arbitrary representatives of these classes.
We will now develop a sensible representation for the equivalence classes.

We drop the notion of up and down for the following considerations, i.e., $b\in e$ is just one of two nodes of some edge $e$.
We assume some ordering of the different processing times, with $p(i)$ denoting the $i$-th processing time for $i\in [d]$.
Any set of jobs $J'$ induces a vector $\lambda\in\mathbb{Z}^{d}_{\geq 0}$ where $\lambda_i$ is the number of jobs in $J'$ that have the $i$-th processing time, i.e., $\lambda_i=|\sett{j\in J'}{p_j=p(i)}|$.
We set $p(\lambda)=\sum_{i\in[d]}p(i)\lambda_i$.
Let $\kappa(e,b)$ be the number of connection types of jobs from $\cutjobs{e,b}$ with respect to $\cutmachs{e,a}$.
Note that due to Remark \ref{rem:rw_types_bounded_by_width} we get $\kappa(e,b)\leq 2^k$.
Again assuming some ordering, for $i\in[\kappa(e,b)]$ let $\typesizes{e,b}(i)$ be the size vector induced by the $i$-th connection type of $\cutjobs{e,b}$ with respect to $\cutmachs{e,a}$ and $\typemachs{e,a}(i)\subseteq\cutmachs{e,a}$ the machines from $\cutmachs{e,a}$ the respective jobs may be processed on.
Moreover let $\typesizes{e,b}=(\typesizes{e,b}(1),\dots,\typesizes{e,b}(\kappa(e,b)))$.
Now the equivalence classes of $\sim_{e,b}$ can naturally be represented and characterized by vectors $\iota\leq\typesizes{e,b}$.
\begin{remark}\label{rem:rw_classes_bounded_by_k_and_d}
For each $e\in E(T)$ and $b\in e$ there are at most $n^{\kappa(e,b)d}$ different vectors $\iota\leq\typesizes{e,b}$.
\end{remark}

We now study the splitting behaviour of job classes at inner nodes.
Consider a set $J'$ that is sent through an edge $f=\set{u,v}$ and then forth through an incident edge $g=\set{v,w}$.
Then there are vectors $\iota_f$ and $\iota_g$ representing $J'$ in the context of $f$ and $g$ respectively.
However, some other set $J''$ represented by $\iota_f$ will also be represented by $\iota_g$, that is $\iota_f$ translates uniquely into $\iota_g$.
We formalize this notion by the definition of a translation function $\tau_{f,g}:\sett{\iota}{\iota\leq\typesizes{f,u}}\rightarrow\sett{\iota'}{\iota'\leq\typesizes{g,v}}$. 
For each $i\in[\kappa(f,u)]$ there is a unique $i'\in[\kappa(g,v)]$ with $\typemachs{f,v}(i)\cap\cutmachs{g,w}=\typemachs{g,w}(i')$, i.e., the $i$-th connection type of $\cutjobs{f,u}$ translates into the $i'$-th connection type of $\cutjobs{g,v}$. 
Let $\upsilon_{f,g}:[\kappa(f,u)]\rightarrow[\kappa(g,v)]$ be given by $i\mapsto i'$.
Now for each $\iota\leq\typesizes{f,u}$ we set $\tau_{f,g}(\iota)= (\iota'(1),\dots,\iota'(\kappa(g,v)))$, with $\iota'(i')\in\mathbb{Z}_{\geq 0}^d$ and more precisely $\iota'(i')=\sum_{i\in\upsilon^{-1}_{f,g}(i')}\iota(i)$.
With this we can formulate an analogue of (\ref{eq:rw_job_splitting}).
For a fix schedule $\sigma$ let $\iota_{a,b}$ be the representative of the set of jobs that $\sigma$ sends from $a$ to $b$.
Now let $t$ be an inner node with neighbours $u,v,w$.
For neighbours $x,y$ of $t$, the set $\sendjobs{x,t,y}$ considered in the last paragraph now has a representative in the contexts of $\set{x,t}$ and $\set{t,y}$.
Fixing the first one $\iota_{x,t,y}\leq\iota_{x,t}$, the second one can be obtained via a transformation function, yielding:
\begin{align}\label{eq:rw_job_class_split}
\iota_{u,t} &= \iota_{u,t,v} + \iota_{u,t,w} & \iota_{t,u} &= \tau_{(\set{v,t},\set{u,t})}(\iota_{v,t,u}) + \tau_{(\set{w,t},\set{u,t})}(\iota_{w,t,u})
\end{align} 

From now on we will use the up and down notion like before ($e=\set{a,b}\in E(T)$ with $a$ above and $b$ below).
Let $\hat{\iota}\leq\typesizes{e,b}$ and $\check{\iota}\leq \typesizes{e,a}$ be candidate job classes to be send up and down $e$.
Considering Lemma \ref{lem:rw_recurrence_1} we set $\Opt(e,\hat{\iota},\check{\iota})$ to be the minimum value $\Opt(I_{e,t}(\hat{J},\check{J}))$ where $\hat{J}$ is represented by $\hat{\iota}$ and $\check{J}$ by $\check{\iota}$.

For the case when $b$ is a leaf not much changes.
If $\eta^{-1}(b)$ is a job $j^*$, the class of $\set{j^*}$ has only one element and is represented by $\typesizes{e,b}$.
Therefore we get that $\Opt(e,\hat{\iota},\check{\iota})=0$ for $\hat{\iota}=\typesizes{e,b}$ and $\check{\iota}=0$, and $\Opt(e,\hat{\iota},\check{\iota})=\infty$ otherwise.
If $\eta^{-1}(b)$ is a machine $i^*$, we have $\typesizes{e,b}=0$ and there are only two possible connection types for jobs from $\cutjobs{e,a}$, because jobs can be processed on $i^*$ or not, i.e., $\kappa(e,a)\leq 2$.
In any case we get $\Opt(e,\hat{\iota},\check{\iota})=\sum_{i\in[\kappa(e,a)]}p(\check{\iota}(i))$.

Now let $b$ be an inner node again with lower neighbours $\el$ and $\er$ to which it is connected via edges $e_{\el}$ and $e_{\er}$.
We may assume that we know the values $\Opt(e_{\el},\hat{\lambda},\check{\lambda})$ and $\Opt(e_{\er},\hat{\rho},\check{\rho})$ for candidate job classes $(\hat{\lambda},\check{\lambda},\hat{\rho},\check{\rho})$ to go up or down the left or right edge respectively.
We want to identify the set $\Xi_{e}(\hat{\iota},\check{\iota})$ of quadruples $(\hat{\lambda},\check{\lambda},\hat{\rho},\check{\rho})$ that are compatible with $\hat{\iota}$ and $\check{\iota}$, i.e., that fulfil (\ref{eq:rw_job_class_split}).
For this let $\check{\iota}_{\el},\check{\iota}_{\er}\leq\check{\iota}$ with $\check{\iota}_{\el}+\check{\iota}_{\er}=\check{\iota}$, $\hat{\lambda}_{\el},\hat{\lambda}_{\er}\leq\typesizes{e_{\el},\el}$ with $\hat{\lambda}_{\el}+\hat{\lambda}_{\er}\leq\typesizes{e_{\el},\el}$, and $\hat{\rho}_{\el},\hat{\rho}_{\er}\leq\typesizes{e_{\er},\er}$ with $\hat{\rho}_{\el}+\hat{\rho}_{\er}\leq\typesizes{e_{\er},\er}$, such that $\tau_{e_{\el},e}(\hat{\lambda}_{\el})+\tau_{e_{\er},e}(\hat{\rho}_{\er}) = \hat{\iota}$. 
By setting $\hat{\lambda} = \hat{\lambda}_{\el}+\hat{\lambda}_{\er}$, $\check{\lambda}=\tau_{e,e_{\el}}(\check{\iota}_{\el})+\tau_{e_{\er},e_{\el}}(\hat{\rho}_{\el})$, $\hat{\rho} = \hat{\rho}_{\el}+\hat{\rho}_{\er}$ and $\check{\rho}=\tau_{e,e_{\er}}(\check{\iota}_{\er})+\tau_{e_{\el},e_{\er}}(\hat{\lambda}_{\er}))$ we get such a tuple and the set $\Xi_{e}(\hat{\iota},\check{\iota})$ is defined by such tuples.
\begin{lemma}\label{lem:rw_splitting}
$\Opt(e,\hat{\iota},\check{\iota}) = \min_{(\hat{\lambda},\check{\lambda},\hat{\rho},\check{\rho})}
\max\set{\Opt(e_{\el},\hat{\lambda},\check{\lambda}),\Opt(e_{\er},\hat{\rho},\check{\rho})}$.
\end{lemma}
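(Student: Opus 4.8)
The plan is to prove the two inequalities of the claimed identity separately, mirroring the set-level recurrence~(\ref{eq:rw_opt_set_2}) and its class-level reformulation in Lemma~\ref{lem:rw_recurrence_1}. The only genuinely new ingredient over those arguments is that the translation functions $\tau$ (built from the maps $\upsilon_{f,g}$ on connection types, well defined by Remark~\ref{rem:rw_types_bounded_by_width}) transport connection types faithfully along adjacent edges, so that a splitting of \emph{actual} job sets obeying~(\ref{eq:rw_job_splitting}) corresponds to a splitting of \emph{class vectors} obeying~(\ref{eq:rw_job_class_split}), and conversely every legal splitting of class vectors can be realized by actual sets. Throughout I will use that $\Xi_{e}(\hat{\iota},\check{\iota})$ is the index set of the minimum on the right-hand side.

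For ``$\geq$'', I would take representatives $\hat{J}$ of $\hat{\iota}$ and $\check{J}$ of $\check{\iota}$ with $\Opt(e,\hat{\iota},\check{\iota})=\Opt(I_{e,b}(\hat{J},\check{J}))$ and apply~(\ref{eq:rw_opt_set_2}): there is a tuple $(\hat{L},\check{L},\hat{R},\check{R})\in\Lambda_{e}(\hat{J},\check{J})$ with $\Opt(I_{e,b}(\hat{J},\check{J}))=\max\set{\Opt(I_{e_{\el},\el}(\hat{L},\check{L})),\Opt(I_{e_{\er},\er}(\hat{R},\check{R}))}$. Letting $\hat{\lambda},\check{\lambda},\hat{\rho},\check{\rho}$ be the classes of $\hat{L},\check{L},\hat{R},\check{R}$, and using that $\tau$ commutes with disjoint unions, the defining identities of $\Lambda_{e}(\hat{J},\check{J})$ translate termwise into those of $\Xi_{e}(\hat{\iota},\check{\iota})$ via~(\ref{eq:rw_job_class_split}); hence $(\hat{\lambda},\check{\lambda},\hat{\rho},\check{\rho})\in\Xi_{e}(\hat{\iota},\check{\iota})$. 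Since $\Opt(e_{\el},\hat{\lambda},\check{\lambda})$ is by definition the minimum of $\Opt(I_{e_{\el},\el}(\cdot))$ over representatives of its arguments, we get $\Opt(I_{e_{\el},\el}(\hat{L},\check{L}))\geq\Opt(e_{\el},\hat{\lambda},\check{\lambda})$ and likewise on the right, so $\Opt(e,\hat{\iota},\check{\iota})\geq\max\set{\Opt(e_{\el},\hat{\lambda},\check{\lambda}),\Opt(e_{\er},\hat{\rho},\check{\rho})}\geq\min_{(\hat{\lambda},\check{\lambda},\hat{\rho},\check{\rho})\in\Xi_{e}(\hat{\iota},\check{\iota})}\max\set{\Opt(e_{\el},\hat{\lambda},\check{\lambda}),\Opt(e_{\er},\hat{\rho},\check{\rho})}$.

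For ``$\leq$'', I would fix a quadruple $(\hat{\lambda},\check{\lambda},\hat{\rho},\check{\rho})\in\Xi_{e}(\hat{\iota},\check{\iota})$ attaining the right-hand side, together with the sub-vectors $\check{\iota}_{\el},\check{\iota}_{\er},\hat{\lambda}_{\el},\hat{\lambda}_{\er},\hat{\rho}_{\el},\hat{\rho}_{\er}$ witnessing its membership in $\Xi_{e}$. The core step is realizability: fix a representative $\check{J}$ of $\check{\iota}$ and carve it into $\check{J}_{\el},\check{J}_{\er}$ of classes $\check{\iota}_{\el},\check{\iota}_{\er}$ (possible since $\check{\iota}_{\el}+\check{\iota}_{\er}=\check{\iota}$ bucket by bucket, a bucket being a (connection type, processing time)-pair); choose the representative of $\hat{\lambda}$ attaining $\Opt(e_{\el},\hat{\lambda},\check{\lambda})$ and carve it into $\hat{J}_{\el}$ of class $\hat{\lambda}_{\el}$ and $\hat{L}_{\er}$ of class $\hat{\lambda}_{\er}$, and symmetrically carve the minimizing representative of $\hat{\rho}$ into $\hat{J}_{\er}$ and $\hat{R}_{\el}$. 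Then $\hat{J}=\hat{J}_{\el}\,\dot{\cup}\,\hat{J}_{\er}$ has class $\hat{\iota}$ precisely by the $\Xi_{e}$-identity $\tau_{e_{\el},e}(\hat{\lambda}_{\el})+\tau_{e_{\er},e}(\hat{\rho}_{\er})=\hat{\iota}$, and reassembling $\hat{L},\check{L},\hat{R},\check{R}$ from these pieces exactly as in the definition of $\Lambda_{e}$ yields a tuple of $\Lambda_{e}(\hat{J},\check{J})$; hence~(\ref{eq:rw_opt_set_2}) gives $\Opt(I_{e,b}(\hat{J},\check{J}))\leq\max\set{\Opt(I_{e_{\el},\el}(\hat{L},\check{L})),\Opt(I_{e_{\er},\er}(\hat{R},\check{R}))}$. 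Using that $\Opt(I_{e_{\el},\el}(\cdot))$ depends on the jobs \emph{entering} the subinstance only through their class (the invariance behind Lemma~\ref{lem:rw_recurrence_1}), the right-hand side equals $\max\set{\Opt(e_{\el},\hat{\lambda},\check{\lambda}),\Opt(e_{\er},\hat{\rho},\check{\rho})}$; since $\Opt(e,\hat{\iota},\check{\iota})\leq\Opt(I_{e,b}(\hat{J},\check{J}))$, taking the minimum over $\Xi_{e}(\hat{\iota},\check{\iota})$ finishes this direction.

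I expect the realizability step of the ``$\leq$'' direction to be where the care goes: the pieces $\hat{L}_{\er}$ and $\hat{R}_{\el}$ are shared between the two children, appearing in the ``outgoing'' set of one and the ``incoming'' set of the other, so one must check that freely choosing the minimizing outgoing representative on the left is compatible with the right child's constraints. The point that makes this harmless is the asymmetry just noted --- $\Opt(e_{\er},\hat{\rho},\check{\rho})$ is a genuine minimum over the jobs sent \emph{out} of the right subinstance but is insensitive to which representative of $\check{\rho}$ is sent \emph{in} --- so the coupling only touches incoming sets. Everything else is routine bookkeeping parallel to the earlier recurrence proofs; as with Lemmas~\ref{lem:rw_recurrence_1} and the recurrences for the treewidth case, I would omit or heavily abbreviate those computations.
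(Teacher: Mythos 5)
Your proof is correct and takes essentially the same route as the paper's: both directions decompose/recompose an optimal schedule across $e_{\el}$ and $e_{\er}$, pass between actual job sets satisfying (\ref{eq:rw_job_splitting}) and class vectors satisfying (\ref{eq:rw_job_class_split}) via the translation functions, and rely on choosing class-optimal representatives for the outgoing sets while exploiting that incoming sets matter only up to their class. The coupling issue you flag (the pieces $\hat{L}_{\er}$, $\hat{R}_{\el}$ appearing in both children) is exactly the point the paper's proof handles implicitly when it sets $\Opt(e_{\el},\hat{\lambda},\check{\lambda})=\Opt(I_{e_{\el},\el}(\hat{L},\check{L}))$, so no discrepancy there.
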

\begin{proof}
If the righthand side equals $\infty$, it is easy to see that the equation holds, and we will therefore assume $\Opt(e,\hat{\iota},\check{\iota})<\infty$.
For given $\hat{\iota}\leq\typesizes{e,a}$ and $\check{\iota}\leq\typesizes{e,b}$ let $\check{J}\subseteq\cutjobs{e,a}$ be any set represented by $\check{\iota}$ and $\hat{J}^*\subseteq\cutjobs{e,b}$ an optimal one represented by $\hat{\iota}$, i.e., minimizing  $\Opt(I_{e,b}(\hat{J},\check{J}))$.
Let $\sigma^*$ be an optimal schedule for $I_{e,b}(\hat{J},\check{J})$.
Furthermore let $\hat{L}^*$, $\check{L}^*$, $\hat{R}^*$ and $\check{R}^*$ be the sets that $\sigma^*$ sends up or down through $e_{\el}$ or $e_{\er}$ respectively, and let $\hat{\lambda}^*$, $\check{\lambda}^*$, $\hat{\rho}^*$ and $\check{\rho}^*$ their classes.
Than $\sigma^*$ induces schedules $\sigma_{\el}^*$ and $\sigma_{er}^*$ for $I_{e,\el}$ and $I_{e,\er}$. 
We get:
\begin{align*}
\Cmax(\sigma^*) 	&= \max\set{\Cmax(\sigma_{\el}^*),\Cmax(\sigma_{\er}^*)}\\
				&\geq \max\set{\Opt(e_{\el},\hat{\lambda}^*,\check{\lambda}^*),\Opt(e_{\er},\hat{\rho}^*,\check{\rho}^*)}\\
				&\geq \min_{\substack{(\hat{\lambda},\check{\lambda},\hat{\rho},\check{\rho})\in\\ \Xi_{e}(\hat{\iota},\check{\iota})}}
\max\set{\Opt(e_{\el},\hat{\lambda},\check{\lambda}),\Opt(e_{\er},\hat{\rho},\check{\rho})}
\end{align*}
Now let $(\hat{\lambda},\check{\lambda},\hat{\rho},\check{\rho})\in \Xi_{e}(\hat{\iota},\check{\iota})$ minimizing the right hand side of of the considered equation with corresponding splitting vectors $\check{\iota}_{\el},\check{\iota}_{\er},\hat{\lambda}_{\el},\hat{\lambda}_{\er},\hat{\rho}_{\el},\hat{\rho}_{\er}$.
Moreover let $\hat{L}$ and $\hat{R}$ be optimal sets represented by $\hat{\lambda}$ and $\hat{\rho}$ respectively. 
Splitting $\hat{L}$, $\hat{R}$ and $\check{J}$ corresponding to the splitting of their job classes we can obtain sets $\check{L}$, $\check{R}$ and $\hat{J}$ that are represented by $\check{\lambda}$, $\check{\rho}$ and $\hat{\iota}$ respectively and fulfil (\ref{eq:rw_job_splitting}).
We have now $\Opt(e_{\el},\hat{\lambda},\check{\lambda})=\Opt(I_{e_{\el},\el}(\hat{L},\check{L}))$ and $\Opt(e_{\er},\hat{\rho},\check{\rho})=\Opt(I_{e_{\er},\er}(\hat{R},\check{R}))$.
Let $\sigma_{\el}$ and $\sigma_{\er}$ be respective optimal schedules.
Than $\sigma:=\sigma_{\el}\cup\sigma_{\er}$ is a schedule for $I_{e,b}(\hat{J},\check{J})$ and $\Cmax(\sigma)$ equals the right hand side of the considered equation. 
Since $\sigma^*$ was chosen optimal with an optimal class representative we have furthermore $\Cmax(\sigma)\geq \Cmax(\sigma^*)$.
This yields:
\begin{align*}
\Opt(e,\hat{\iota},\check{\iota}) 	&= \Cmax(\sigma^*) =  \Cmax(\sigma)\\
									&= \min_{\substack{(\hat{\lambda},\check{\lambda},\hat{\rho},\check{\rho})\in\\ \Xi_{e}(\hat{\iota},\check{\iota})}}
\max\set{\Opt(e_{\el},\hat{\lambda},\check{\lambda}),\Opt(e_{\er},\hat{\rho},\check{\rho})}
\end{align*}
Moreover we get that $\hat{J}$ is optimal as well.
\end{proof}

\paragraph{Results.}

With these considerations a dynamic program for $\RA$ can be defined.
This can be done in a way such that its running time is in $\Oh(m^2n^{\Oh(d2^k)})$, proving Theorem \ref{thm:rw_result} together with the Rounding Lemma and the considerations of Section \ref{sec:prelim}.

\subsection*{Bi-cographs}

We show that the path- and tree-hierarchical and nested cases are all special cases of the case that the incidence graph is a bi-cograph.
Bi-cographs were introduced as a bipartite analogue of cographs \cite{GV97}.
\begin{definition}
For a bipartite graph $G=(A\dot{\cup}B,E)$ the \emph{bi-complement} of $G$ is the graph $(A\dot{\cup}B,\sett{\set{a,b}}{a\in A, b\in B, \set{a,b}\not\in E})$. 
A graph is called \emph{bi-cograph}, iff it is bipartite and can be reduced to isolated vertices by recursively bi-complementing its connected bipartite subgraphs. 
\end{definition}
It is known \cite{GV00} that their cliquewidth and therefore also their rankwidth is bounded by $4$.
Furthermore by recursively bi-complementing the connected bipartite subgraphs, a certain decomposition of a given bi-cograph can be found in linear time that is similar to cotrees of cographs \cite{GV97}.
This decomposition can easily be turned into a branch decomposition, for which in the application studied here the number of connection types of jobs $\kappa(e,u)$ for every edge $e$ of the decomposition and $v\in e$ is bounded by $2$.
\begin{lemma}
Let $I$ be an instance of $\RA$ with path- or tree-hierarchical or nested restrictions.
Then the incidence graph of $I$ is a bi-cograph.
\end{lemma}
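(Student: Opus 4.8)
The plan is to verify, for each of the three restriction types (path-hierarchical, tree-hierarchical, nested), that the incidence graph can be reduced to isolated vertices by recursively bi-complementing connected bipartite subgraphs. Since path-hierarchical restrictions are a special case of tree-hierarchical ones (a path is a tree), it suffices to handle the tree-hierarchical and nested cases separately. Throughout I would use the structural description of the restrictions: in the tree-hierarchical case the machines form a rooted tree and $\validmachs{j}$ is the set of machines on the path from the root to some machine $m_j$; in the nested case the sets $\set{\validmachs{j}}{j\in\jobs}$ form a laminar family.

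For the tree-hierarchical case, I would argue by induction on the machine tree. Consider the incidence graph $G$. If the machine tree has a root $r$ with children subtrees $T_1,\dots,T_s$, then every job $j$ either has $\validmachs{j}\subseteq\amachs{1}\cup\set{r}$-type structure, i.e. its root-to-$m_j$ path lies within $\set{r}\cup V(T_\ell)$ for exactly one child, or $\validmachs{j}=\set{r}$ (only the root). The key observation is that $r$ is adjacent to \emph{every} job (every root-to-$m_j$ path contains $r$), so $r$ is a universal vertex on the $B$-side; bi-complementing the connected component containing $r$ turns $r$ into an isolated vertex and leaves a disjoint union of the incidence graphs of the subtrees $T_1,\dots,T_\ell$ (the jobs confined to child $\ell$ only see machines in $T_\ell$ after removing $r$). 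Each such piece is again the incidence graph of a tree-hierarchical instance on a smaller machine tree, so induction finishes. I would also need to observe that a single edge or isolated vertex is trivially a bi-cograph, giving the base case.

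For the nested case, I would induct on the laminar family. Let $U=\bigcup_j\validmachs{j}$ be the ground set and consider the maximal sets in the laminar family; these partition $U$ (after possibly adding singletons for machines not covered, which correspond to isolated machine-vertices and can be set aside). Within each maximal block $U_\ell$, the jobs $j$ with $\validmachs{j}=U_\ell$ are adjacent to all machines in $U_\ell$; I would split on whether any job's set equals the top block. If the incidence graph is connected, one maximal block $U_1$ already exhausts the machine set $U$; then either some job has $\validmachs{j}=U$ — in which case that job is a universal vertex on the $A$-side and bi-complementing removes it, recursing on the laminar family with the top set deleted — or no job equals $U$, the next level down partitions $U$ into blocks $U_{1,1},\dots,U_{1,s}$, and one checks that bi-complementing (which here adds all non-edges and may temporarily connect blocks) followed by the structural recursion still reduces to pieces that are incidence graphs of nested instances on the smaller ground sets. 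The cleaner route, which I would take, is to exploit the already-proven path/tree-hierarchical reduction together with the known equivalences or to directly mimic the tree argument: a laminar family is naturally encoded by a forest whose nodes are the sets, and this forest plays exactly the role of the machine tree above, with a job attached at the node corresponding to its set.

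The main obstacle I anticipate is bookkeeping in the bi-complement step: bi-complementing changes adjacencies globally within a connected component, so after the operation the graph is no longer literally an incidence graph of a subinstance of the same type, and one must argue that the \emph{connected components} produced are. The cleanest way around this is to always bi-complement a component in which some vertex (the root machine, or the top job) is universal to the opposite side, so that the bi-complement isolates that vertex and leaves the remaining adjacencies \emph{unchanged}, hence producing genuine incidence graphs of strictly smaller instances of the same type. Verifying that such a universal vertex always exists at each recursive step — which is exactly where the hierarchical/nested structure is used — is the crux, and the rest is routine induction together with the base case that edgeless graphs are bi-cographs.
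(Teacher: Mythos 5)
Your structural observations are exactly the ones the paper relies on: in the tree-hierarchical case the root machine is adjacent to every job, and in the nested case the jobs with maximal $\validmachs{j}$ are adjacent to every machine in their connected component. The gap is in how you exploit them. You claim that bi-complementing a connected component containing a vertex that is universal to the opposite side ``isolates that vertex and leaves the remaining adjacencies \emph{unchanged}.'' That is false: by definition the bi-complement replaces the edge set of the component by its complement inside the complete bipartite graph, so \emph{every} adjacency within the component is flipped, not only those incident to the universal vertex. After one bi-complement the root $r$ is indeed isolated, but a job whose path leads into the subtree $T_1$ is now adjacent to all machines of the other subtrees $T_2,\dots,T_s$; what remains is therefore typically one large connected component, not the disjoint union of the subtree incidence graphs, and your induction hypothesis cannot be applied to it. The same problem occurs in your nested case when you ``remove'' the universal job.

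The paper repairs exactly this point by bi-complementing \emph{twice}: the first application isolates the universal vertices (the machines that can process all jobs, resp.\ the jobs with maximal $\validmachs{j}$), and the second application, applied to the connected component(s) of what remains, restores the original adjacencies there, at which point the graph genuinely decomposes into incidence graphs of strictly smaller instances of the same type (the instances induced by the root's child subtrees, resp.\ the nested instance with the maximal jobs deleted). Your proof needs this two-step mechanism, or some other argument that the recursive bi-complementation process still terminates in isolated vertices; as written, the inductive step does not go through. The remainder of your outline --- reducing the path case to the tree case, encoding the laminar family by its forest, and the trivial base cases --- is consistent with the paper and unproblematic.
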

\begin{proof}
We first consider the case that $I$ has tree hierarchical restrictions.
Let $T$ be a corresponding rooted tree with $V(T)=\machs$. 
Then there is at least one machine (the root of $T$) that can process all jobs.
After bi-complementing the connected bipartite subgraphs of the incidence graph this machine is isolated.
This can be repeated:
After bi-complementing two more times the nearest descendants of the root in $T$ that cannot process all jobs will be isolated. 
Iterating this, at some point all machines and therefore also all jobs will be isolated.

Now let $I$ be an instance with nested restrictions.
Note that the jobs $j\in\jobs$ with maximal $\validmachs{j}$ (with respect to $\subseteq$) are all in different connected components of the incidence graph and connected to all machines in their component.
Hence they are isolated after bi-complementing the first time.
If we bi-complement a second time and remove these jobs we get a new instance with nested restrictions and less jobs.
By iterating this argument the claim follows.
\end{proof}

\paragraph{Acknowledgements.}

The Rounding Lemma in the presented form was formulated by Lars Rohwedder and Kevin Prohn as part of a student project.

\bibliography{library}

\begin{thebibliography}{10}

\bibitem{ACP87}
Stefan Arnborg, Derek~G Corneil, and Andrzej Proskurowski.
\newblock Complexity of finding embeddings in ak-tree.
\newblock {\em SIAM Journal on Algebraic Discrete Methods}, 8(2):277--284,
  1987.

\bibitem{AMO11}
Yuichi Asahiro, Eiji Miyano, and Hirotaka Ono.
\newblock Graph classes and the complexity of the graph orientation minimizing
  the maximum weighted outdegree.
\newblock {\em Discrete Applied Mathematics}, 159(7):498--508, 2011.

\bibitem{Bod96}
Hans~L Bodlaender.
\newblock A linear-time algorithm for finding tree-decompositions of small
  treewidth.
\newblock {\em SIAM Journal on computing}, 25(6):1305--1317, 1996.

\bibitem{Bod98}
Hans~L Bodlaender.
\newblock A partial k-arboretum of graphs with bounded treewidth.
\newblock {\em Theoretical computer science}, 209(1):1--45, 1998.

\bibitem{CR05}
Derek~G Corneil and Udi Rotics.
\newblock On the relationship between clique-width and treewidth.
\newblock {\em SIAM Journal on Computing}, 34(4):825--847, 2005.

\bibitem{EKS14}
Tom{\'a}{\v{s}} Ebenlendr, Marek Kr{\v{c}}{\'a}l, and Ji{\v{r}}{\'\i} Sgall.
\newblock Graph balancing: A special case of scheduling unrelated parallel
  machines.
\newblock {\em Algorithmica}, 68(1):62--80, 2014.

\bibitem{EL11}
Leah Epstein and Asaf Levin.
\newblock Scheduling with processing set restrictions: Ptas results for several
  variants.
\newblock {\em International Journal of Production Economics}, 133(2):586--595,
  2011.

\bibitem{GV97}
Vassilis Giakoumakis and Jean-Marie Vanherpe.
\newblock Bi-complement reducible graphs.
\newblock {\em Advances in Applied Mathematics}, 18(4):389--402, 1997.

\bibitem{GV00}
Vassilis Giakoumakis and Jean-Marie Vanherpe.
\newblock Linear time recognition of weak bisplit graphs.
\newblock {\em Electronic Notes in Discrete Mathematics}, 5:138--141, 2000.

\bibitem{HO08}
Petr Hlinen{\`y} and Sang-il Oum.
\newblock Finding branch-decompositions and rank-decompositions.
\newblock {\em SIAM Journal on Computing}, 38(3):1012--1032, 2008.

\bibitem{HS76}
Ellis Horowitz and Sartaj Sahni.
\newblock Exact and approximate algorithms for scheduling nonidentical
  processors.
\newblock {\em Journal of the ACM (JACM)}, 23(2):317--327, 1976.

\bibitem{KK16}
Du{\v{s}}an Knop and Martin Kouteck{\`y}.
\newblock Scheduling meets n-fold integer programming.
\newblock {\em arXiv preprint arXiv:1603.02611}, 2016.

\bibitem{KV98}
Phokion~G Kolaitis and Moshe~Y Vardi.
\newblock Conjunctive-query containment and constraint satisfaction.
\newblock In {\em Journal of Computer and System Sciences}, 1998.

\bibitem{LLP09}
Kangbok Lee, Joseph Y-T Leung, and Michael~L Pinedo.
\newblock A note on graph balancing problems with restrictions.
\newblock {\em Information Processing Letters}, 110(1):24--29, 2009.

\bibitem{LST90}
Jan~Karel Lenstra, David~B Shmoys, and {\'E}va Tardos.
\newblock Approximation algorithms for scheduling unrelated parallel machines.
\newblock {\em Mathematical programming}, 46(1-3):259--271, 1990.

\bibitem{MW15}
Matthias Mnich and Andreas Wiese.
\newblock Scheduling and fixed-parameter tractability.
\newblock {\em Mathematical Programming}, 154(1-2):533--562, 2015.

\bibitem{MSW10}
Gabriella Muratore, Ulrich~M Schwarz, and Gerhard~J Woeginger.
\newblock Parallel machine scheduling with nested job assignment restrictions.
\newblock {\em Operations Research Letters}, 38(1):47--50, 2010.

\bibitem{OLL08}
Jinwen Ou, Joseph Y-T Leung, and Chung-Lun Li.
\newblock Scheduling parallel machines with inclusive processing set
  restrictions.
\newblock {\em Naval Research Logistics (NRL)}, 55(4):328--338, 2008.

\bibitem{OS06}
Sang-il Oum and Paul Seymour.
\newblock Approximating clique-width and branch-width.
\newblock {\em Journal of Combinatorial Theory, Series B}, 96(4):514--528,
  2006.

\bibitem{SS10}
Marko Samer and Stefan Szeider.
\newblock Constraint satisfaction with bounded treewidth revisited.
\newblock {\em Journal of Computer and System Sciences}, 76(2):103--114, 2010.

\bibitem{Sve12}
Ola Svensson.
\newblock Santa claus schedules jobs on unrelated machines.
\newblock {\em SIAM Journal on Computing}, 41(5):1318--1341, 2012.

\bibitem{Sze03}
Stefan Szeider.
\newblock On fixed-parameter tractable parameterizations of sat.
\newblock In {\em International Conference on Theory and Applications of
  Satisfiability Testing}, pages 188--202. Springer, 2003.

\bibitem{Sze08}
Stefan Szeider.
\newblock Not so easy problems for tree decomposable graphs.
\newblock In {\em International Conference on Discrete Mathematics}, 2008.

\end{thebibliography}

\end{document}